\tikzstyle{block}=[draw opacity=0.7,line width=1.4cm]
\tikzset{snake arrow/.style=
{-triangle 45,
line width=1.4pt,
decorate,
decoration={snake,amplitude=1mm,segment length=10mm,post length=2mm}},
}
\newcommand{\darwin}{\textsc{Darwin}}
\newcommand{\darwinp}[1]{\textsc{Darwin}(#1)}
\newcommand{\babystep}{\texttt{LocalSearch}}
\newcommand{\universal}{\texttt{UniversalSearch}}
\newcommand{\hybrid}{\texttt{HybridSearch}}
\newcommand{\behzad}[1]{{\small\color{blue}[(BG)~#1]}}
\newcommand{\sg}[1]{{\small\color{blue}[(SG)~#1]}}
\newcommand{\wctan}[1]{{\small\color{red}[#1]~(wang-chiew)}}
\newcommand{{\kempegreedy}}{G{\scriptsize{REEDY}}}
\newcommand{{\ourgreedy}}{ScoreG{\scriptsize{REEDY}}}
\newtheorem{problem}{Problem}
\newtheorem{definition}{Definition}
\newtheorem{theorem}{Theorem}
\newtheorem{lemma}{Lemma}
\newtheorem{example}{Example}
\newcommand{\spara}[1]{\smallskip\noindent{\bf{#1}}}
\newcommand{\mpara}[1]{\medskip\noindent{\bf{#1}}}
\newlength{\oldtextfloatsep}\setlength{\oldtextfloatsep}{\textfloatsep}
\newlength{\oldfloatsep}\setlength{\oldfloatsep}{\floatsep}
\NewDocumentCommand{\rot}{O{45} O{1em} m}{\makebox[#2][l]{\rotatebox{#1}{#3}}}
\begin{document}

\title{Adaptive Rule Discovery for Labeling Text Data}

\numberofauthors{3} 
\author{
\alignauthor
Sainyam Galhotra\\
       \affaddr{UMass Amherst}\\
       \email{sainyam@cs.umass.edu}
\alignauthor
Behzad Golshan\\
       \affaddr{Megagon Labs}\\
       \email{behzad@megagon.ai}
\alignauthor 
Wang-Chiew Tan\\
       \affaddr{Megagon Labs}\\
       \email{wangchiew@megagon.ai}
}

\maketitle

\begin{abstract}
Creating and collecting labeled data is one of the major bottlenecks in machine
learning pipelines and the emergence of automated feature generation techniques
such as deep learning, which typically requires a lot of training data, has
further exacerbated the problem. While weak-supervision techniques have
circumvented this bottleneck, existing frameworks either require users to write
a set of diverse, high-quality rules to label data (e.g., Snorkel), or require
a labeled subset of the data to automatically mine rules (e.g., Snuba). The
process of manually writing rules can be tedious and time consuming. At the
same time, creating a labeled subset of the data can be costly and even
infeasible in imbalanced settings. This is due to the fact that a random
sample in imbalanced settings often contains only a few positive instances.

To address these shortcomings, we present \darwin, an interactive system
designed to alleviate the task of writing rules for labeling text data in
weakly-supervised settings. Given an initial labeling rule, \darwin\
automatically generates a set of candidate rules for the labeling task at hand,
and utilizes the annotator's feedback to adapt the candidate rules. We describe
how \darwin\ is scalable and versatile.  It can operate over large text corpora
(i.e., more than 1 million sentences) and supports a wide range of labeling
functions (i.e., any function that can be specified using a context free grammar).  
Finally, we demonstrate with a suite of experiments over five real-world datasets
that \darwin\ enables annotators to generate weakly-supervised labels efficiently
and with a small cost. In fact, our experiments show that rules discovered by
\darwin\ on average identify 40\% more positive instances compared to Snuba even
when it is provided with 1000 labeled instances.
\end{abstract}


\section{Introduction}
Today, many applications are powered by machine learning techniques. The
success of {\em deep learning} methods in domains such as natural language
processing and computer vision is further fuelling this trend. While deep
learning (and machine learning in general) can offer superior performance,
training such systems typically requires a large set of labeled examples, which
is expensive and time-consuming to obtain.

{\em Weak supervision} techniques circumvent the above problem to some extent,
by leveraging heuristic rules that can generate (noisy) labels for a subset of
data\footnote{To be more concise, we refer to heuristic rules simply as
heuristics or rules as well
throughout the paper.}. A large volume of labels can be obtained at a low cost this way, and to
compensate for the noise, noise-aware techniques can be used for further
improving the performance of machine learning models \cite{snorkel, snorkel2}.
However, obtaining high-quality labeling heuristics remains a challenging
problem. A subset of existing frameworks, with {\em Snorkel} \cite{snorkel}
being the most notable example, rely on domain experts to provide a set of
labeling heuristics which can be a tedious and time-consuming task. In
contrast, other frameworks aim to automatically mine useful heuristics using
further provided supervision. For instance, Snuba \cite{Snuba} circumvents
dependence on domain experts by requiring a labeled subset of the data, and
then utilizing it to automatically derive labeling heuristics. {\em Babble
labble} is another example which asks expert to label a few examples and
explain their choice (in natural language). This explanation is used to derive
labeling heuristics. While these approaches have been quite effective in
certain settings, we elicit their limitations with the following real-world
example on text-data.

\begin{example}
Consider a corpus that are questions submitted by a hotel's guests to the
concierge. Our goal is to build an intent classifier to find (and label) the
set of questions asking for directions or means of transportation from one
location to another. Below is a sample of messages from the corpus with
positive instances marked as green.

\begin{center}
\includegraphics[scale=0.32]{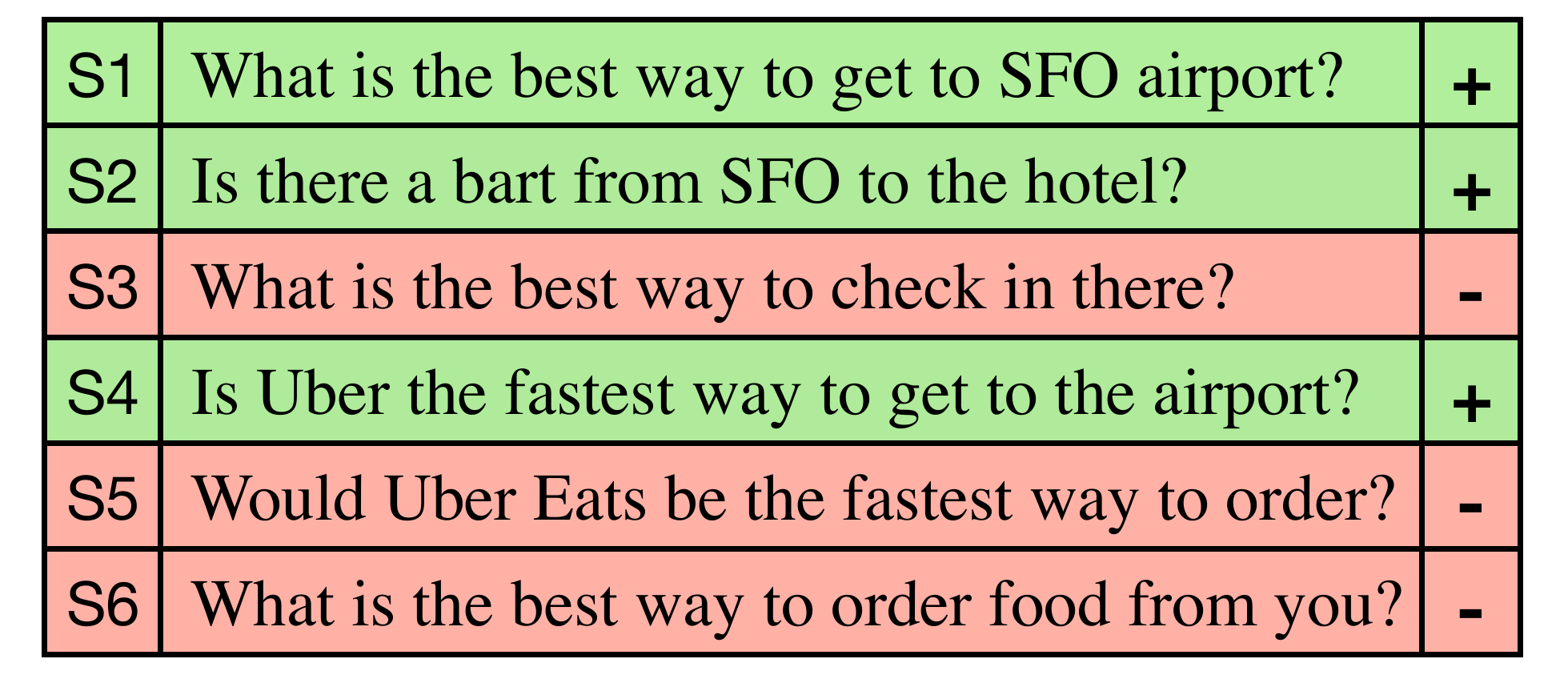}
\end{center}
\label{ex:ex1}
\end{example}

Relying on domain experts to provide labeling heuristics for tasks such as the
one presented in Example~\ref{ex:ex1} is a common approach but it has a number
of shortcomings:

\begin{itemize}[leftmargin=*]

    \item It is \textbf{time consuming}. Annotators must be familiar with the
    rule language (e.g., Stanford's Tregex or AI2's IKE language). Moreover,
    they need to be acquainted with the dataset to specify useful rules, i.e.,
    rules that label a reasonable number of instances with a small amount of
    noise. This is normally done with trial-and-error and fine-tuning of the
    rules on a sample of the corpus, which can be quite tedious.

    \item Oftentimes, some \textbf{useful rules remain undiscovered}. This is
    because annotators may miss important keywords or  possess limited domain
    knowledge. For example, the word `\emph{bart}' (which refers to a
    transportation system in California) is clearly useful for the task in
    Example~\ref{ex:ex1}. However, annotators may miss the important keyword
    `{\em bart}' or they may not even know what `{\em bart}' is (especially
    those who are not from the area).

    \item It yields rules with \textbf{overlapping results}. If multiple
    annotators work on writing rules independently, they are likely to end up
    with identical or similar rules. 
    since Hence, the number of distinct labels obtained does not always grow
    linearly with the number of annotators, which is rather inefficient.
\end{itemize}

The alternative approach would be to automatically mine useful heuristics with
systems such as Snuba or Babble labble. Both systems require a set of labeled
instances (accompanied by natural language explanations in case of Babble
labble) which can be costly and oftentimes infeasible to collect in imbalanced
settings. For instance while Example~\ref{ex:ex1} shows a balanced number of
positive and negative instances, in practice, the positive instances often make
up only a tiny fraction of the entire corpus. Hence, labeling a random sample
would not be sufficient to obtain enough positive instances. Consequently,
automatically inferring heuristic rules is not feasible using the few
discovered positive instances.

\begin{figure}[t]
    \centering
    \includegraphics[scale=0.38]{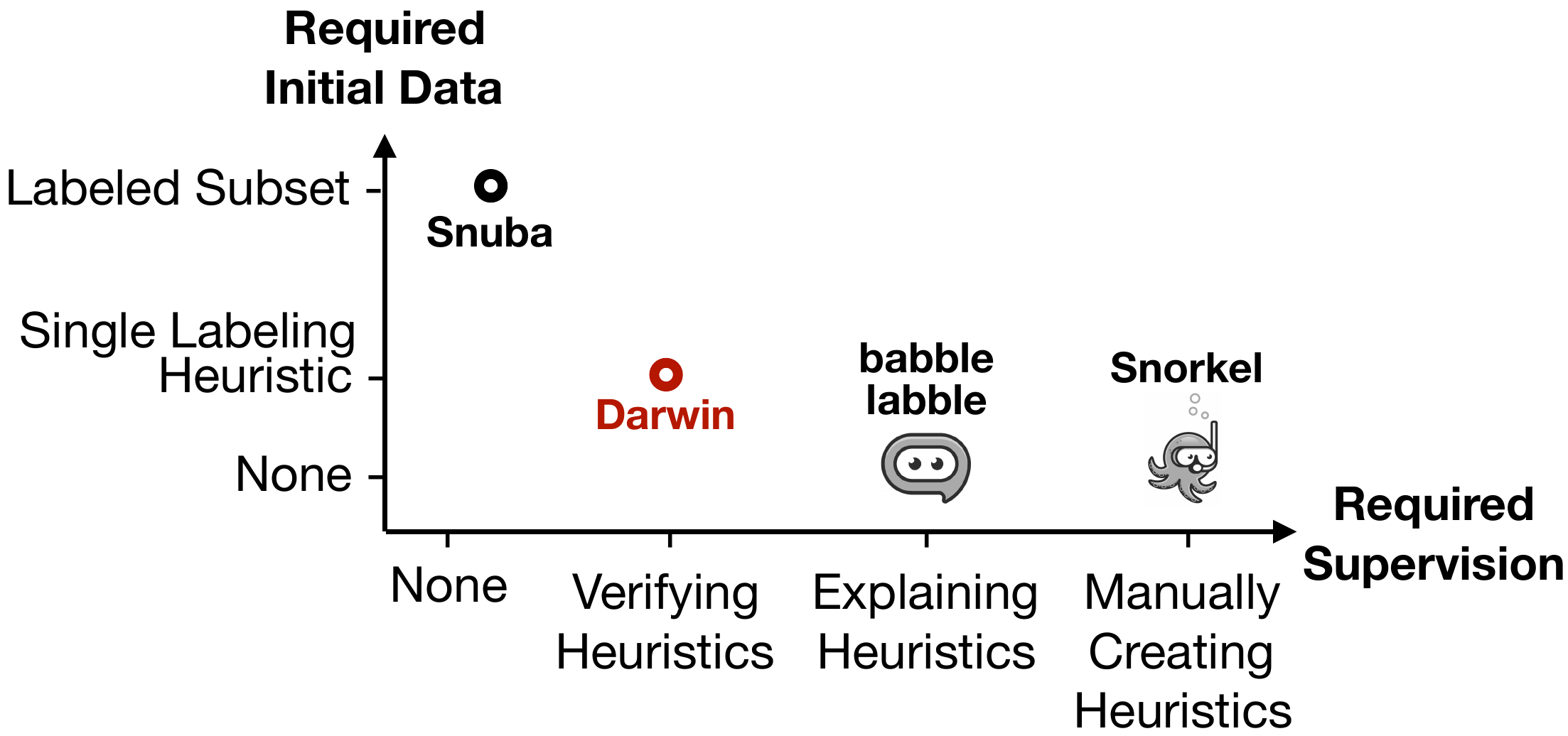}
    \caption{Comparing weak-supervision frameworks}
    \label{fig:compare}
\end{figure}

To mitigate the above issues, we present \darwin, an adaptive rule discovery
system for text data. Figure~\ref{fig:compare} highlights how \darwin\ compares
with other state-of-the-art weak-supervision frameworks. Compared to Snuba,
\darwin\ requires far less labeled instances. In fact as we show in our experiments,
a single labeling rule (or a couple of labeled instances) would be sufficient for
\darwin. Compared to Snorkel and Babble labble, \darwin\ requires a lower degree of
supervision by domain experts. More explicitly, \darwin\ requires experts to
simply verify the suggested heuristics while Snorkel requires them to manually write
such rules and Babble labble requires them to provide explanations for why a particular
label is assigned to a given data point.

Given a corpus and a seed labeling rule, \darwin\ identifies a set of promising
candidate rules. The best candidate rule (along with a few example instances
matching the rule) is then presented to the annotator to confirm whether it is
useful for capturing the positive instances or not. Figure~\ref{fig:query}
presents an example of this step for the intent described in
Example~\ref{ex:ex1}. The annotator is presented with examples that satisfy the
rule and asked to answer whether the rule is useful for the intent (a YES/NO
question). Based on the response, \darwin\ adaptively identifies the next set
of promising candidate rules. This interactive process, where rules are
illustrated with examples, facilitates annotators to identify the most
effective set of rules without the need to fully understand the corpus or the
rule language. Our contributions are as follows.

\begin{figure}
\centering
\frame{\includegraphics[scale=0.17]{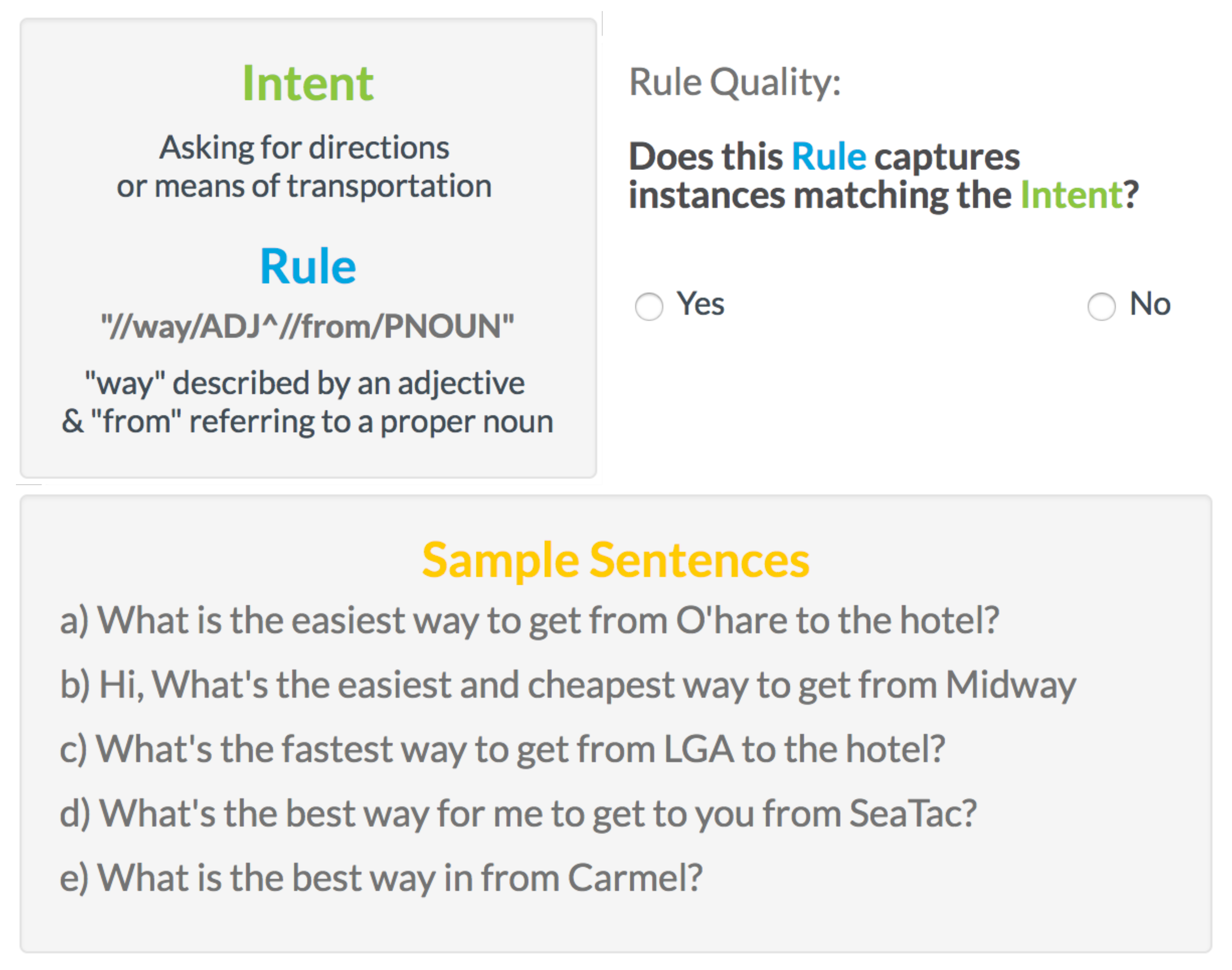}}
\caption{Sample query to annotators.\label{fig:query}}
\end{figure}

\begin{itemize}[leftmargin=*]

    \item \darwin\ supports any rule language that can be specified using a
    context-free grammar. Therefore, it can generate a wide range of rules,
    from simple phrase matching to complex conditions over the dependency parse
    trees of the sentences.

    \item \darwin\ can effectively identify rules over a large text corpora,
    even when the number of candidate rules is exponentially large. In fact,
    we show that verifying 50 heuristics suggested by \darwin\ is enough to
    achieve a F1-score of $0.80$. Furthermore, we present theoretical results on
    approximation guarantees of \darwin{}.

    \item \darwin\ does not require annotators to be familiar with the rule
    language. By analyzing the similarity and the overlap between the set of
    sentences matching different rules, \darwin\ automatically surface patterns
    in data and also supports parallel discovery of rules by asking different
    annotators to evaluate different rules.

    \item We demonstrate how \darwin\ can be used for a variety of labeling
    tasks: classify intents, find sentences that mention particular entity
    types, and identify sentences that describe certain relationships between
    entities (i.e., relation extraction).
\end{itemize}

In the following sections, we define our problem, describe \darwin, and
demonstrate its effectiveness and efficiency through a suite of experiments.
Specifically, we show that \darwin\ outperforms other baseline approaches in
its ability to generate a larger set of labeled examples by asking a limited
number of questions.

\vspace{-2mm}
\section{Preliminaries \& Problem Definition}
\label{sec:problem}
In a nutshell, \darwin\ takes as input an unlabeled corpus of sentences along
with an initial seed labeling heuristic (which is assumed to generate at least
two positive instances). \darwin\ then identifies promising candidate labeling
heuristics. \darwin\ leverages an oracle to verify whether a particular candidate
heuristic is effective at capturing positive instances or not. Finally, the 
set of discovered heuristics are forwarded to 
Snorkel~\cite{snorkel}\footnote{Note that Snorkel both provides a framework for
writing labeling rules as well as tools for training noise-aware models. Here
we refer to the latter.} to train a high precision classifier. Before describing
\darwin's rule discovery pipeline in detail, we provide a formal definition of
labeling heuristics along with a description of an oracle.

\mpara{Heuristic search space.}
Naturally, labeling heuristics can be of different types with distinct
semantics. For example, a heuristic may check for certain phrases in a
sentence~\cite{Snuba} or it may enforce some conditions on the parse
tree~\cite{koko} and POS tags of a sentence. In \darwin, the
space of possible heuristics are specified using a collection of
{\em Heuristic Grammars}, where each grammar describes a particular
type of labeling heuristics. These concepts are formally defined as follows.

\begin{definition}[Heuristic Grammar]
A Heuristic Grammar $\mathcal{G}$ is a Context-Free Grammar (CFG).
Recall that a CFG consists of a collection of derivation rules. 
\end{definition}

For a given heuristic grammar $\mathcal{G}$, we define labeling heuristics as
follows.

\begin{definition}[Labeling heuristic]
A labeling heuristic $r$ is a derivation of the grammar $\mathcal{G}$. We use $C_r$
    to denote the set of sentences in the corpus that satisfy the heuristic $r$, and 
    refer to $|C_r|$ as  it's \emph{coverage}.
\end{definition}

To further clarify the above definitions, let us consider a simplified regular
expression grammar called TokensRegex. TokensRegex captures all regular expressions
over tokens considering `+' and `*' operators\footnote{We use TokenRegex to explain 
\darwin's pipeline. \darwin\ functionality is not restricted to this grammar and we discuss more complex grammar.}. This grammar can be formally writen using a CFG grammer as shown
below.

\begin{example} [TokensRegex Grammar]
Let $\mathcal{V}$ denote the set of all possible words. The regular expression grammar on the tokens comprises of the following derivation rules.\\[1ex]
\indent $A \rightarrow vA$ \ \ \ \ $(\forall v\in\mathcal{V})$\\
\indent $A \rightarrow A+A$\\
\indent $A \rightarrow A*A$\\
\indent $A \rightarrow \epsilon$\\[1ex]
The above TokensRegex grammar allows for a regular expression of words as a candidate labeling heuristic.
For example, this grammar generates heuristics such as `best way to' or `shuttle' as well as less meaningful heuristics such as `shuttle is airport' as candidates for the task described in Example~\ref{ex:ex1}.
A sentence satisfies the heuristic if it contains that phrase. The sentences $s1$, $s3$ and $s6$ in Example~\ref{ex:ex1}
satisfy the heuristic $r = \textit{`best way to'}$, hence $C_h=\{s1,s3,s6\}$.
\label{ex:ex2}
\end{example}

As a default setting, \darwin\ comprises of two different grammars (a) TokensRegex (b) TreeMatch, with the ability to plug in more heuristic grammars as long as they are context-free. While TokensRegex is capable of capturing lexical patterns and phrases, it fails to capture syntactic patterns and pattern over parse trees. TreeMatch grammar
captures such patterns to identify more complex and generic heuristic functions.

\begin{definition}[TreeMatch Grammar] Let $\mathcal{V}$ denote the set of terminals  comprising of all the tokens and Part-of-Speech
(POS) tags~\cite{pos} present in the corpus. E.g., NOUN, VERB, etc.

\noindent\textbf{Derivation Rules:} The grammar has three fundamental operations
that make up a heuristic, namely And ($\wedge$), Child (/), and Descendant (//). 
The symbol `a/b' implies that terminal `b' should be a child of terminal `a' in
the dependency parse tree. The symbol `a//b' implies that terminal `b' should be
a descendant of terminal 'a' in the parse tree. Given that, the derivation rules of
the grammar are:
\begin{alltt}
    A \(\rightarrow\) /A
    A \(\rightarrow\) A\(\wedge\)A
    A \(\rightarrow\) //A
    A \(\rightarrow\) v (\(\forall v\in\mathcal{V}\))
\end{alltt}
\end{definition}

\begin{figure}
\includegraphics[scale=0.25]{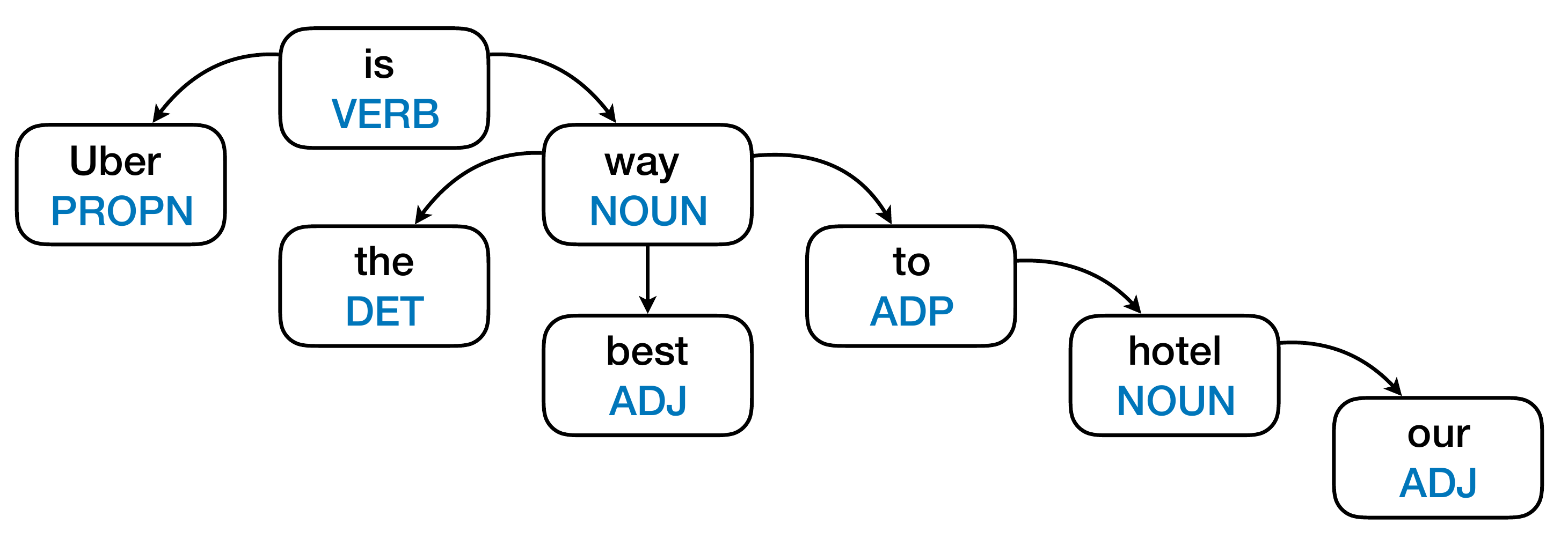}
\caption{An example of a parse tree\label{fig:tree_grammar}.}
\vspace{-.8em}
\end{figure}

It is important to mention that the complexity of heuristics that can be specified
using the TreeMatch grammar exceeds what rule-mining frameworks such as Snuba or Babble
labble can capture.



\mpara{Oracle Abstraction.} Finally, we formalize the feedback that we may either obtain from a single annotator,
a group of annotators, or a crowd-sourcing platform using the notion of Oracles as
follows.

\begin{definition}[Oracle]
An Oracle $O$ is a function which given a heuristic $r$ and a few samples from its
    coverage set $C_h$ outputs a YES/NO answer indicating whether or not $r$ is
    adequately precise. 
    \label{def:oracle}
\end{definition}
An Oracle plays the role of a perfect annotator who always answers the questions correctly.
In practice, annotators may provide incorrect answers (as we show in our experiments), but
the notion of an oracle enables us to provide insights into the theoretical aspects of our problem.

\mpara{Problem statement.}
We are now ready to formally define our problem.
Given a labeling task, our goal is to find a set $R$ of labeling heuristics such
that the union of the coverage of the heuristics in $R$,
denoted as $P = \bigcup_{r \in {R}} C_r$, would have a high recall (i.e., to contain a high ratio of the
positive instances in the corpus). 
We would like to maximize the recall of set $P$ without posing too many queries to the oracle. 
We empirically observed that users label a heuristic as precise only when the heuristic has precision at least 0.8. Hence, in this paper, we do not focus on optimizing the precision of heuristics, which we can also rely on various de-noising techniques from the weak supervision literature~\cite{snorkel}.
%

\begin{problem} [Maximize Heuristics Coverage]
Given a corpus $S$, a seed labeling function $r_0$, an oracle $O$, and a budget $b$, find a set $R$ of
    labeling heuristics using at most $b$ queries to the oracle, such that the recall of set $P$, i.e., the
    union of the coverage of heuristics in $R$, is maximized.
\label{prob:cover}
\end{problem}

\begin{lemma}
The {\sc Maximize Heuristics Coverage} problem is NP-hard.\label{lem:NPhard}
\end{lemma}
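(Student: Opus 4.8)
The plan is to show that the \textsc{Maximize Heuristics Coverage} problem generalizes the classical \textsc{Maximum Coverage} problem, which is known to be NP-hard, via a polynomial-time reduction. Recall that an instance of \textsc{Maximum Coverage} consists of a universe $U = \{u_1, \dots, u_n\}$, a family of subsets $S_1, \dots, S_m \subseteq U$, and an integer $k$; the goal is to pick an index set $I$ with $|I| \le k$ maximizing $|\bigcup_{i \in I} S_i|$ (the decision version additionally fixes a target $t$ and asks whether coverage $\ge t$ is achievable). I would map an arbitrary such instance to an instance of our problem.

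First I would build the corpus. For every element $u_j \in U$ I create one sentence $s_j$ and designate all of them as positive, so the corpus has exactly $n$ positives. For every set $S_i$ I introduce a fresh, pairwise-distinct token $w_i$, and I place $w_i$ inside sentence $s_j$ if and only if $u_j \in S_i$. Working with the phrase-matching heuristics of the grammar, the single-token heuristic $r_i = w_i$ then satisfies $C_{r_i} = \{\, s_j : u_j \in S_i \,\}$, i.e., its coverage is an exact copy of $S_i$. I set the query budget to $b = k$, let the oracle answer YES on each $r_i$ (so every candidate is deemed precise and the oracle imposes no extra restriction), and take as the seed $r_0$ a heuristic matching a token placed in two designated positives, which discharges the requirement that the seed cover at least two positive instances.

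Next I would argue equivalence. Because every sentence is positive, the recall of $P = \bigcup_{r \in R} C_r$ equals $|\bigcup_{r \in R} C_r| / n$, and the correspondence $r_i \leftrightarrow S_i$ is a coverage-preserving bijection between single-token heuristics and input sets. Hence choosing a set $R$ of at most $b = k$ heuristics that maximizes recall is exactly choosing at most $k$ sets that maximize the number of covered elements: a solution with recall $\ge t/n$ exists if and only if the \textsc{Maximum Coverage} instance admits a solution covering $\ge t$ elements. The reduction is polynomial in $n + m$, so a polynomial-time optimal algorithm for \textsc{Maximize Heuristics Coverage} would solve \textsc{Maximum Coverage}, establishing NP-hardness.

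The step I expect to be the main obstacle is ruling out ``shortcut'' heuristics: I must guarantee that no compound expression in the grammar can cover more positives per query than the single-token heuristics, since otherwise one clever heuristic could absorb several sets and collapse the budget, trivializing the reduction. I would handle this by controlling the corpus so that compound operators gain nothing---placing the special tokens non-adjacently so that concatenation-based heuristics match no sentence---and, if the grammar admits disjunction-like operators, by carrying out the reduction over the pure phrase-matching fragment of the grammar; since that fragment is itself a context-free grammar, hardness for it is a legitimate special case that implies hardness of the general problem. A secondary check is confirming that the all-YES oracle is consistent with Definition~\ref{def:oracle} and that the seed's presence does not by itself inflate recall, both of which are easily arranged in the construction above.
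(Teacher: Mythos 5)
Your proof is correct and follows essentially the same route as the paper's: a reduction from \textsc{Maximum Coverage} in which every instance is marked positive so the oracle always answers YES and maximizing recall coincides with maximizing set coverage. The paper's version simply posits heuristics with coverage sets $C_{r_i}=A_i$ without explicitly building the corpus, so your token-per-set construction and your discussion of ruling out shortcut compound heuristics are refinements of, not departures from, the published argument.
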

\begin{proof}
We show the hardness of our problem by reducing the maximum-coverage
problem to an instance of our problem. Given a collection of sets
$\mathcal{A} =\{A_1,A_2,\ldots,A_n\}$ and a budget $b$, the
maximum-coverage problem aims to find $b$ sets from $\mathcal{A}$
such that the size of their union is maximized. Given an instance
of the maximum-coverage problem, we create an instance of our problem
as follows. For each set $A_i$, we define a heuristic $r_i$ with coverage
set $C_{r_i} = A_i$ and mark all the instances as positives.
Consequently, the oracle $O$ would always respond with a Yes as all
the heuristic are perfectly precise. Now, it is easy to see that the
coverage of set $P$ in our setting is equivalent to the coverage
of selected sets in the maximum-coverage problem. As a result,
if our heuristic discovery problem can be solved in polynomial time,
then the corresponding sets would form the optimal maximum-coverage
solution. Hence, our problem is also NP-hard.
\end{proof}


Note that while we focus on maximizing the recall, it is also useful to report the performance of the classifier that is trained using our weakly-supervised labels.
Therefore, in our experiments, we also record the F-score of our trained classifier to provide a
better evaluation of \darwin.

\section{The {\Large \darwin} System}
\label{sec:darwin}
In this section, we describe the architecture of \darwin\ which is illustrated
in Figure~\ref{fig:darwin_arch}. \darwin{} operates in multiple phases that aim to identify diverse set of heuristics used to identify positives. The pipeline is initialized with a seed labelling function or a couple of positive sentences.  \darwin{} 
learns a rudimentary classifier using these positive sentences and  the classifier is refined with evolving training data. In order to identify new heuristics \darwin{} leverages the following properties. (i) The generalizability of the trained classifier helps guide the search towards semantically similar heuristics. For example, on identifying the importance of `bus' as a heuristic, \darwin{} identifies `public transport' as another possibility due to their related semantics\footnote{This generalization is possible via word embeddings
which are provided as an input to the classifier. We provide more details of our classifier in the experiments.}. (ii)  Local structural changes to the already identified heuristics helps identify new heuristics eg. consider `What is the best way to the hotel?' as input seed sentence, \darwin{} constructs local modifications by dropping and adding tokens (derivation rules in general)  and identifying a new heuristic `shuttle to the hotel'. 
\darwin{} leverages these intuitions to adaptively refine the search space and simultaneously learn a precise classifier with high coverage over the positives.

Before describing the architecture, we define a data structure that is critical for efficient execution of \darwin{}. All  candidate heuristics that are considered by \darwin{} are organized in the form of a hierarchy. This hierarchy captures subset/superset relationship among heuristics. Heuristics with higher coverage are placed closer to the root and the ones with lower coverage are closer to the leaves. For example. `best way to the hotel' is a subset of `best way to' and will be its descendant. One of the key properties of this hierarchical structure is that if a heuristic $r$ is identified to capture positives, then any of its subset (descendant in the hierarchy) does not capture any new positive. This data structure has $O(1)$ update time to identify the subsets of a heuristic. Additionally,  it is helpful for efficient execution of local structural changes to any heuristic. All these benefits will be discussed in detail in the later sections.

Algorithm~\ref{algo:darwin} presents the pseudo code of the end-to-end \darwin\ architecture. \darwin's input consists of the corpus to be
labeled, a collection of heuristic grammar, and one (or more) seed labeling function(s). Alternatively, a set of positive instances can be provided instead of seed labeling heuristics. 
The output of \darwin\ is the set of
generated heuristics, the positive instances that are discovered, and a classifier that is
trained using the labeled data.

\setlength{\textfloatsep}{5pt}
\begin{algorithm}[t]
\caption{\darwin\ \label{algo:darwin}}
\begin{algorithmic}[1]
{\small
\REQUIRE Input Corpus $\mathcal{S}$, seed heuristic $r_0$, budget $b$
\ENSURE 	Collection of heuristics ${R}$, Set of positive instances $P$, Classifier $C$ 
\STATE $\mathcal{I}\leftarrow \texttt{generate\_index}(S)$
\STATE $Q\leftarrow \phi$
\STATE $P\leftarrow \texttt{coverage}(r_0)$
\STATE $C,P'\leftarrow \texttt{train\_classifier}(P,\{r_0\},S)$
\WHILE{$|Q|\leq b$}
\STATE $\mathcal{H}\leftarrow \texttt{generate\_hierarchy}(\mathcal{S},P',\mathcal{I})$
\STATE $q\leftarrow$ \texttt{traversal} ($H,P,Q,C$)
\IF{\texttt{oracle\_query}(q)}
\STATE $P\leftarrow P\cup \texttt{coverage}(q)$
\STATE $C,P''\leftarrow \texttt{train\_classifier}(R,P,S)$
\ENDIF
\STATE $P'\leftarrow P''\setminus P'$
\STATE $\mathcal{H}\leftarrow \texttt{update\_scores} (\mathcal{H})$
\STATE $Q\leftarrow Q\cup \{q\}$
\ENDWHILE
\RETURN ${R,P,C}$
}
\end{algorithmic}
\end{algorithm}

Before the heuristic-discovery phase begins, \darwin\ creates an index over the input corpus for fast
access to the sentences that satisfy a given heuristic (more details in Section~\ref{sec:preprocess}).
The heuristic-discovery phase is an iterative process where \darwin\ interacts with annotators
and uses their feedback to identify new candidates and ask further queries. In a nutshell, each iteration
consists of the following steps. First, the \emph{Candidate Generation} component
generates a small set of promising candidate heuristics (from the space of all possible heuristic functions),
and organizes them in the form of a hierarchy $\mathcal{H}$ (line 6) with the most generic functions at the top and the
stricter ones at the bottom. We will describe shortly how $\mathcal{H}$ is generated and used to prune
less effective heuristics. Once the hierarchy is
built, \darwin's \emph{Hierarchy Traversal} component carefully navigates and evaluates the heuristics in
the hierarchy to find the best candidate (line 7). The best candidate is then presented to the
annotator (line 8). Finally, the updated classifier and scores of 
heuristics are sent back to hierarchy generation to identify new candidates and perform traversal for the next iteration.
We describe the details of these components next.

\begin{figure}
\includegraphics[scale=0.23]{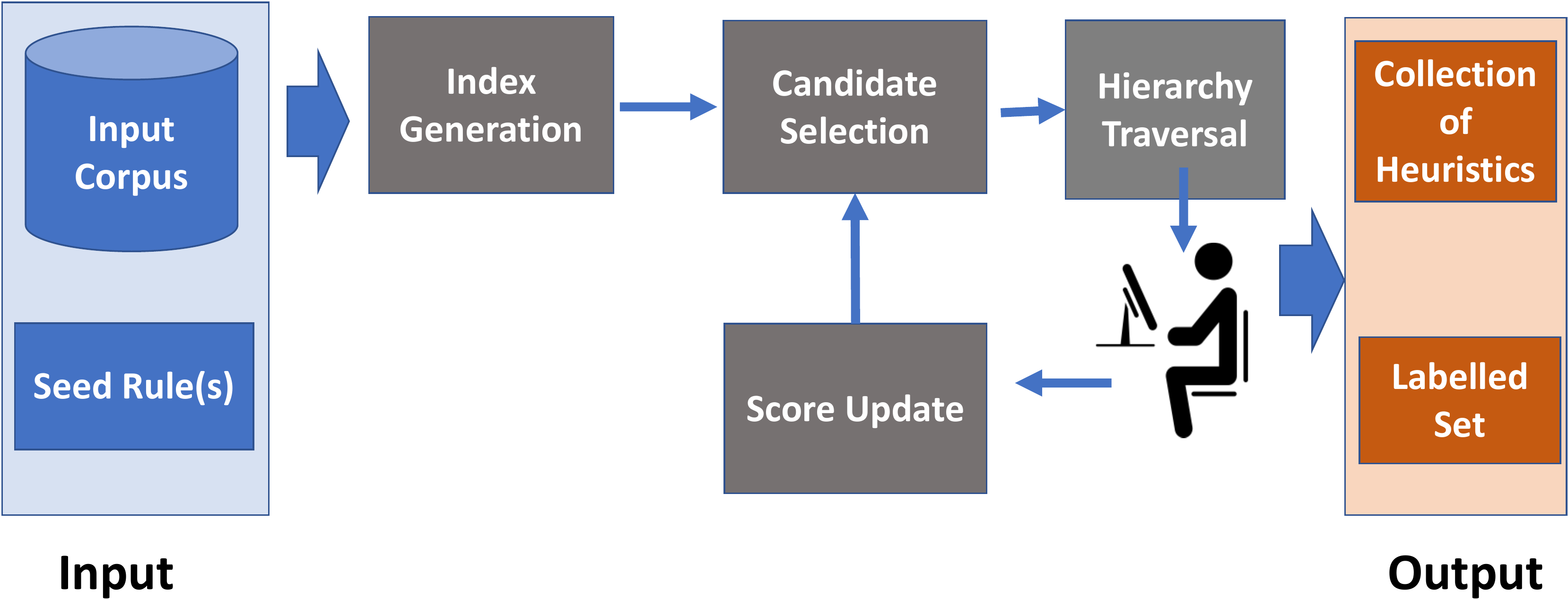}
\caption{\darwin{}'s architecture\label{fig:darwin_arch}. }
\vspace{-1.3em}
\end{figure}

\subsection{Indexing the Input Corpus\label{sec:preprocess}}

\darwin\ creates an index for the input corpus to provide fast access to sentences that
satisfy certain heuristics. This index aims at constructing a space efficient representation of each sentence in the corpus for efficient execution of subsequent steps involving traversal through the various candidate heuristics. This hierarchical structure of this index is very similar to that of a trie.

Given a collection of heuristic grammar $\{G_1,\ldots, G_t\}$ and a sentence $s$, one can enumerate the set
of all possible heuristics of ${G_i}$, generated using  a fixed number of
derivation rules, that $s$ satisfies.
For example, using the TokensRegex grammar, the set of all heuristics
that a sentence $s$ satisfies is the set of all regular expressions that correspond to $s$. We
organize the set of heuristics matching a sentence $s$ into a structure called the
\emph{Derivation Sketch},
which 
summarizes the derivations of all heuristics that match $s$.
Figure~\ref{fig:op_sketch} shows (parts of) the derivation sketch for sentences
$s1$ and $s4$ from Example~\ref{ex:ex1} based on the TokensRegex grammar. 

\begin{figure}
\includegraphics[scale=0.40]{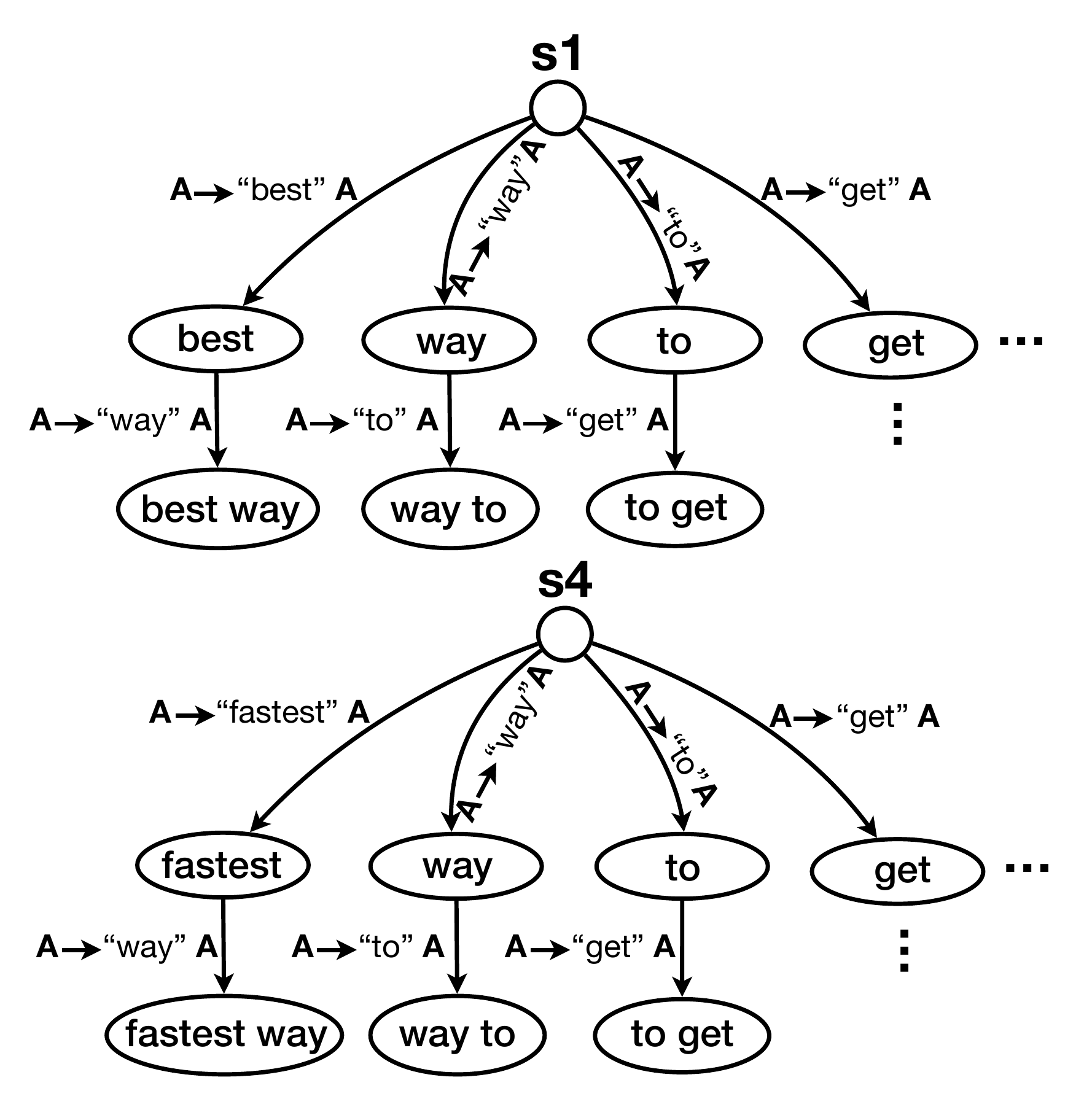}
\vspace{-.2em}
\caption{Examples of derivation sketches\label{fig:op_sketch}}
\vspace{-1em}
\end{figure}

A derivation sketch is built for each sentence in the corpus. After this, we create an
index $\mathcal{I}$ which is a compact representation of all heuristics
that are satisfied by at least one sentence in the corpus. Each node in $\mathcal{I}$
represents a heuristic labeling function and stores the number of sentences that satisfy it,
pointers to the children in the index, and an inverted list that points to
sentences that satisfy the heuristic.

The index $\mathcal{I}$ is created by merging the derivation sketch of sentences, one at a time.
The index is first initialized with the derivation sketch of
the first sentence. Thereafter, for every new sentence $s$, the derivation sketch of $s$ is
merged into $\mathcal{I}$ as follows.
The root node of the sketch
and the root node of $\mathcal{I}$ is merged, and then all nodes (starting form
the merged root) are considered in a breadth-first fashion; The children of the
node under consideration which are derived using the same derivation rule are merged
together. For every node that gets merged, the count of the merged node
is increased by one. Also, the inverted list at that node gets updated to
include the new sentence. Figure~\ref{fig:genindex}
shows the index built from derivation sketches of $s1$ and $s4$ in Figure~\ref{fig:op_sketch}.
Note that the time taken to construct the index is linear in the number of sentences since
we have limited the number of derivation rules possible to generate a heuristic. 
The process is also highly parallelizable as index structures for different parts of the corpus can
be created independently and
then merged. This index also has a linear update time complexity for adding the derivation
sketch of a new sentence.

\mpara{TreeMatch Grammar:} 
This grammar has more operators as compared to TokensRegex and can generate exponentially more candidate heuristics. The derivation sketch can be created as explained by enumerating all sequence of
derivation rules up to a fixed number of steps. However, a more compact derivation sketch for
the TreeMatch grammar is simply the dependency parse tree of the sentence, as we can use
it to quickly check whether a heuristic matches the parse tree or not \cite{koko}.
Figure \ref{fig:tree_grammar} shows the dependency parse tree of a sentence which can serve as its derivation sketch as well.
Given the exponentially many candidate heuristics generated by this grammar, the candidate
generation step is crucial for ignoring useless heuristics and thereby helping the subsequent stages to focus on meaningful heuristics. We evaluate the performance of \darwin\ with this grammar in the
next section.
\begin{figure}
\includegraphics[scale=0.33]{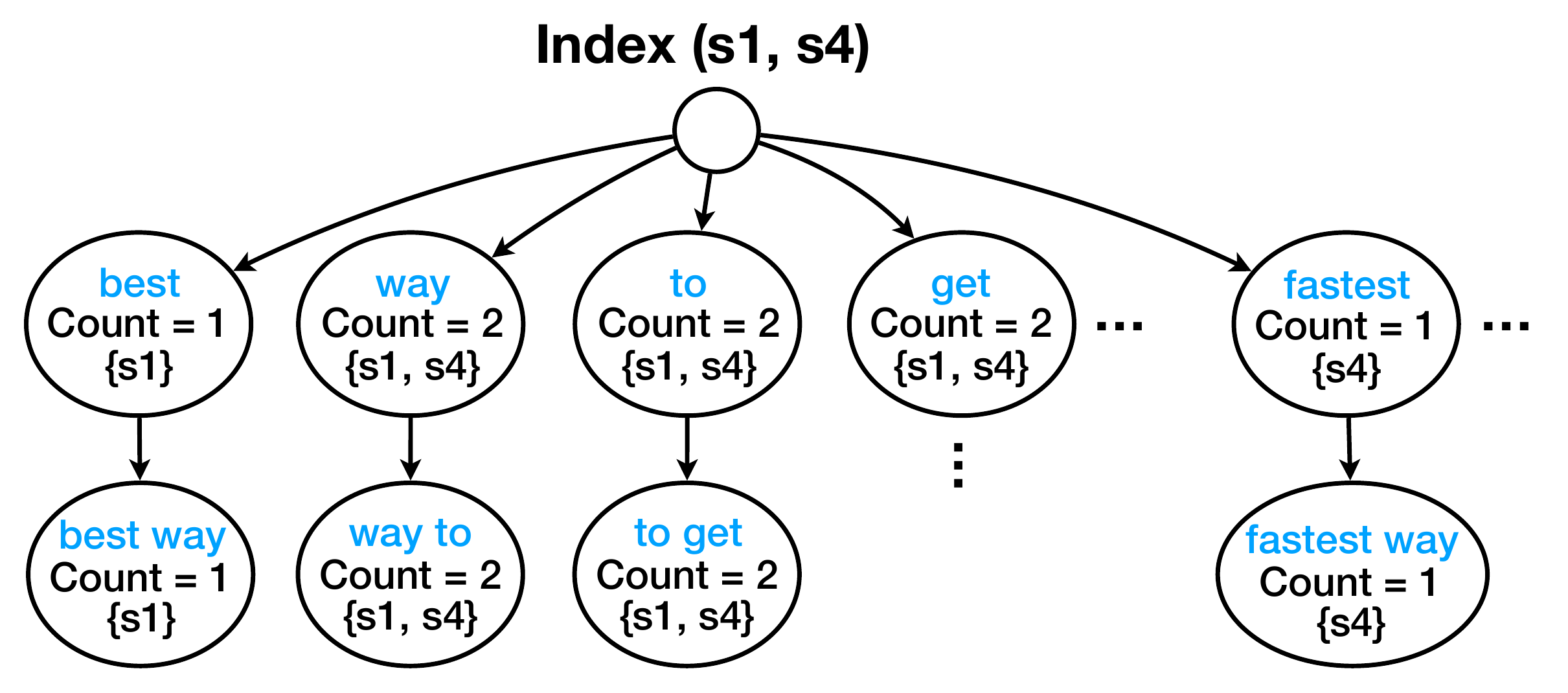}
\caption{An example of index creation process\label{fig:genindex}}
\end{figure}

\subsection{Heuristic-Hierarchy Generation\label{sec:rulegen}}
As mentioned earlier, the number of possible heuristics under a given grammar $\mathcal{G}$
is often exponential in the size of dictionary. The task of the heuristics-hierarchy generation component is to 
generate a manageable set of promising candidate heuristics from the space of all possible heuristics
and organize the generated candidate heuristics in a hierarchy that captures the subset/superset
relationship between the heuristics. Specifically, the hierarchy generation process consists
of the following steps.
First, the \emph{Candidate Generation} step generates a subset of possible heuristics that
have high coverage over the set of positive instances discovered so
far. This algorithm operates in a greedy best-first search mechanism to identify valuable candidates. These heuristics are promising as they already have some overlap with
the existing positive instances. Next, these candidates are arranged in the form of a hierarchy along with subset-superset edges between them.  We describe these steps in detail
next.

\subsubsection{Candidate Generation}
The candidate generation step uses the index $\mathcal{I}$ to generate a set of
heuristic labeling functions with high coverage over the set of positive instances $P$ that are discovered so
far by \darwin. Note that heuristics that (at least partially) cover the set of discovered
positive instances, are likely to be good heuristics and help detect more positive instances.
To efficiently find such heuristics, we rely on one of the interesting properties of index $\mathcal{I}$;
Recall that the count of a node $u\in \mathcal{I}$ refers to the total number of
sentences that contain the tokens along the path from the root to $u$ in their
derivation sketches. As descendant nodes correspond to stricter heuristics, 
the coverage of a heuristic corresponding to a node is never less than the
coverage of any of its descendants in the index. Thus we use a greedy
algorithm to identify a collection of diverse heuristics that have high coverage
over the set of positive instances $P$.

\setlength{\textfloatsep}{5pt}
\begin{algorithm}[t]
\caption{Candidate-heuristic Generation \label{algo:cand}}
\begin{algorithmic}[1]
{\small
\REQUIRE Index $\mathcal{I}$, Set of positive instances $P$, Number of desired heuristics $k$
\ENSURE 	Collection of heuristics ${R}$ 
\STATE ${R}\leftarrow \{*\}$, \ \  $\textit{recentHeuristic}\leftarrow *$, \ \ $\textit{candidates} \leftarrow \phi$
\WHILE {$|{R}|\leq k$}
\STATE $\textit{candidates}\leftarrow \textit{candidates\ }\cup \text{Children}(\textit{recentHeuristic} ,\mathcal{I})$
\STATE $\textit{sortedCandidates} \leftarrow \text{CoverageSort}(\textit{candidates}, P)$
\STATE $\textit{recentHeuristic} \leftarrow \textit{sortedCandidates}[0]$
\STATE $\textit{candidates} \leftarrow \textit{candidates}.\text{remove}(\textit{recentHeuristic})$
\STATE ${R} \leftarrow {R}\cup \textit{recentHeuristic}$
\ENDWHILE
\RETURN ${R}$
}
\end{algorithmic}
\end{algorithm}

Algorithm~\ref{algo:cand} generates candidate heuristics  by exploiting 
the property described above.
The set of candidate heuristics is initialized with heuristic `*' which refers
to the root of index $\mathcal{I}$. This heuristic 
matches all possible sentences in the
corpus. In each iteration, the algorithm adds the children of the previous
iteration's best candidate heuristic to the candidate list (line 3). The candidates
are then sorted in decreasing order of coverage over the set $P$ (line 4). The
candidate with the highest coverage is
removed from the candidate list and appended to list of final results
${R}$ (lines 6-7). This process is repeated until there are $k$ heuristics in ${R}$.
Note that the time complexity of this greedy algorithm is linear in the
number of candidates generated.


Other constraints can also be added to the candidate-heuristic generation phase to ensure
that the generated heuristics satisfy those criteria. For example, \darwin\ can apply heuristics
to ensure that the candidate heuristics are diverse in terms of the set of derivation rules used to
derive the heuristic, their level in the index $\mathcal{I}$, and the set of instances they cover. 
Some of these heuristics help \darwin\ avoid having to evaluate many similar candidate heuristics.

\mpara{Hierarchical Arrangement and edge discovery.} The candidates returned by Algorithm~\ref{algo:cand} have high coverage over the positives (discovered so far).  This component iterates over the generated heuristics to arrange  them into a hierarchy $\mathcal{H}$ following the
same parent/child relationship that index $\mathcal{I}$ captures\footnote{heuristic $r_1$ is a child of
$r_2$ if it can be obtained by applying a derivation rule to $r_2$.} and an edge is added between them.

This hierarchical arrangement of heuristics is followed by a cleanup to get rid of heuristics that do not add any new positive sentences than the ones already identified. The goal of cleanup is to improve the efficiency and space complexity of \darwin\ as the traversal component will never query a heuristic that does not add any new positives.

\subsection{Hierarchy Traversal}
The result of the heuristic-hierarchy generation is a hierarchy $\mathcal{H}$ of promising heuristics. The
hierarchy traversal module determines which heuristic in the hierarchy is the best heuristic to be submitted
to the oracle.

We present three hierarchy traversal techniques:
{\babystep}, \\{\universal}, and \hybrid{}. At a high level, \babystep\ relies on the hierarchy
structure to select the next best candidate from the immediate neighborhood of heuristics
verified by the oracle in the past.
In contrast, \universal\ ignores locality constraints and selects the heuristic with maximum benefit globally. 

Finally, the \hybrid\ traversal combines the
first two techniques 
to find the next best heuristic. The \hybrid\ traversal is more robust than \babystep\ and \universal{}.  All three
techniques work in an iterative fashion, and in each iteration, the criteria
for selecting a heuristic to be sent to the oracle is
based on how {\em beneficial} the heuristic is, which we elaborate next.

\mpara{Benefit of a heuristic (r):}
The benefit of a heuristic $r$ is the expected gain in the positive
set $P$ upon choosing $r$. More formally, the benefit is quantified as
$\sum_{s\in C_r\setminus P} p_s$, where $p_s$ is the probability of sentence
$s$ being a positive instance. In \darwin, these probability values are
estimated by training a classifier\footnote{Any short text classifier would be
ideal for this task.} using the set of positive instances discovered so far and
sampling random instances from the corpus as negatives. The probability
estimates improve as the system iteratively discovers more  heuristics and
the classifier is re-trained with more positive training examples.

We describe our three traversal techniques next.

\subsection{\large \babystep}
\babystep\ traversal algorithm (Algorithm~\ref{algo:babystep}) benefits from
the local hierarchy structure around the heuristics already identified as useful
by the oracle to identify the next best heuristic for querying. Specifically,
\babystep\ maintains a set of candidate heuristics, and selects the most beneficial 
heuristic $r$ from the candidates. If the oracle confirms that $r$ is adequately
precise, then it adds $r$'s parents into the candidate set as they are
generalizations of $r$ and might be helpful at capturing more positive
instances. However, if the oracle labels $r$ as a noisy heuristic, \babystep\ adds
the children of $r$ to the candidate set instead with the hope that a specialized
version of heuristic $r$ might be less noisy.

\babystep\ is simple and efficient at utilizing the structure of the
hierarchy to find promising heuristics to submit to the oracle. Since the algorithm
only explores the local neighborhood of the queried candidates, it has a time
complexity of O($dt$), where $d$ is the maximum degree of an internal node and
$t$ is the number of iterations the algorithm is running for. However, a
disadvantage of \babystep\ is that it may require many traversal steps in cases
where the initial seed heuristic is quite different from other precise heuristics the
system aims to discover. Also, it does not exploit the similarity and the overlap
between the coverage sets of different heuristics. The \universal{} algorithm, which
we describe shortly, addresses these shortcomings by utilizing a holistic view
of the hierarchy.

\setlength{\textfloatsep}{5pt}
\begin{algorithm}[t]
\caption{\babystep{} Traversal\label{algo:babystep}}
\begin{algorithmic}[1]
{\small
\REQUIRE heuristic hierarchy $\mathcal{H}$, Seed heuristic $r_0$
\ENSURE 	Collection of positive instances $P$, Collection of heuristics $R$
\STATE {$QueryCount\leftarrow 0$}
\STATE $R\leftarrow \{r_0\}$,\ \ $P\leftarrow C_{r_0}$, \ \ $\mathcal{C}\leftarrow \text{TrainClassifier} (P)$
\STATE $\textit{localCandidates} \leftarrow \{r_0\}$
\WHILE {{$QueryCount<b$}}
\STATE $r\leftarrow \text{GetMostBeneficialCandidateheuristic}(\textit{localCandidates},\mathcal{C})$
\STATE {$QueryCount\leftarrow QueryCount+1$}
\IF{ OracleResponse$(r)$ is YES}
\STATE $R\leftarrow R\cup r$, \ \ $P\leftarrow P\cup C_r$
\STATE $\textit{localCandidates} \leftarrow (\textit{localCandidates} \setminus\{r\}) \cup \text{Parents} (r)$
\STATE $\mathcal{C}\leftarrow \text{TrainClassifier} (P) $
\ELSE
\STATE  $\textit{localCandidates} \leftarrow (\textit{localCandidates} \setminus\{r\}) \cup \text{Children} (r)$
\ENDIF
\ENDWHILE
\RETURN $P, R$
}
\end{algorithmic}
\end{algorithm}

\mpara{Efficient Implementation.} Since the \babystep\ traversal only
explores a node's immediate parents/children, it does not require the entire hierarchy
apriori. Hence, in its implementation, we can skip the heuristic-generation component and 
expand the hierarchy on the fly based on the oracle's feedback.

\vspace{-1.5mm}

\subsection{\large \universal}
The \universal\ algorithm (see Algorithm \ref{algo:universal}) evaluates all
heuristics present in the hierarchy to identify the best heuristic. In each iteration,
\universal\ omits any heuristic for which the benefit per instance is smaller than
$0.5$, i.e. majority of the instances in $C_r$ are expected to be negatives.
Among the remaining heuristics, it chooses the heuristic with maximum benefit to submit to
the oracle. Based on oracle's feedback, it re-trains the classifier if new
positives were discovered or else it continues to query the next best heuristic to
the oracle. Note that \universal\ captures the best candidates irrespective of
the hierarchy structure.

The strength of \universal\ is in its capability to identify semantic similarity
between heuristics and their matching instances even if they are far apart
in the hierarchy. However, it has the following shortcomings: (1) compared to
\babystep, it is inefficient as it iterates over all heuristics in the hierarchy to
identify the best candidate, and (2) in absence of enough positive instances,
the trained classifier is likely to overfit and not generalize well to other
precise heuristics. In such cases, \universal\ fails to exploit the structure of the
hierarchy to at least find heuristics that are structurally similar to the seed heuristics.

We describe the \hybrid\ algorithm next, which combines the strengths of 
\universal\ and \babystep.

\setlength{\textfloatsep}{5pt}
\begin{algorithm}[t]
\caption{\universal{}  Traversal\label{algo:universal}}
\begin{algorithmic}[1]
{\small
\REQUIRE heuristic hierarchy $\mathcal{H}$, Seed heuristic $r_0$
\ENSURE 	Collection of positive instances $P$, Collection of heuristics $R$
\STATE {$QueryCount\leftarrow 0$}
\STATE $R\leftarrow \{r_0\}$,\ \ $P\leftarrow C_{r_0}$, \ \ $\mathcal{C}\leftarrow \text{TrainClassifier} (P)$
\STATE $\textit{universalCandidates} \leftarrow \{r : r \in \mathcal{H}\}$
\STATE $\mathcal{C}\leftarrow \text{TrainClassifier}(P) $
\WHILE {{$QueryCount<b$}}
\STATE $r\leftarrow \text{GetMostBeneficialCandidate}(\textit{universalCandidates},\mathcal{C})$
\STATE {$QueryCount\leftarrow QueryCount+1$}
\STATE $\textbf{if} \text{\ AvgBenefit}(r) \leq 0.5 \textbf{\ then} \text{\ continue}$
\IF {$\text{OracleResponse}(r)\text{\ is YES}$}
\STATE $R\leftarrow \{r_0\}$,\ \ $P\leftarrow C_{r_0}$
\STATE $\mathcal{C}\leftarrow \text{TrainClassifier}(P) $
\ENDIF
\STATE $\textit{universalCandidates} \leftarrow \textit{universalCandidates} \setminus \{r\}$
\ENDWHILE
\RETURN $P, R$
}
\end{algorithmic}
\end{algorithm}

\vspace{-1.5mm}

\subsection{\large \hybrid{}}

\hybrid\ (See Algorithm \ref{algo:hybrid}) combines the two previous traversal techniques
by maintaining a list of local candidates and a list of universal candidates, and imitating
the strategy of the both traversal algorithms. Starting from the \universal\ strategy,
the \hybrid\ algorithm queries candidate heuristics (with a benefit per instance above $0.5$) to
the oracle. If the algorithm fails to find a precise heuristic within a fixed number of attempts, then it switches to the \babystep\ strategy. Similarly, if the \babystep\ strategy has
no success within a fixed number of attempts, the traversal toggles to the \universal\ strategy. 
The switch between the two strategies is decided based by a parameter $\tau$
(by default 5) which denotes the number of unsuccessful attempt before the switch happens.
Clearly, higher values of $\tau$ discourage switching between the two
strategies.

\setlength{\textfloatsep}{5pt}
\begin{algorithm}[t]
\caption{\hybrid{}  Traversal\label{algo:hybrid}}
\begin{algorithmic}[1]
{\small
\REQUIRE heuristic hierarchy $\mathcal{H}$, Seed heuristic $r_0$
\ENSURE 	Collection of positive instances $P$, Collection of heuristics $R$
\STATE $\textit{universalMode}\leftarrow \text{True}$,\ \  $\textit{attempt}\leftarrow 0 $
\STATE $R\leftarrow \{r_0\}$,\ \ $P\leftarrow C_{r_0}$, \ \ $\mathcal{C}\leftarrow \text{TrainClassifier} (P)$
\STATE $ \textit{localCands}\leftarrow \{r_0\}$, \ \ $\textit{universalCands} \leftarrow \{r : r \in \mathcal{H}\}$
\STATE {$QueryCount\leftarrow 0$}
\WHILE {{$QueryCount<k$}}
\IF{$\textit{attempt} \geq \tau$}
\STATE $\textit{universalMode}\leftarrow \textbf{not } \textit{universalMode}$
\STATE $\textit{attempt}\leftarrow \textit{0}$
\ENDIF
\STATE $\textit{attempt}\leftarrow \textit{attempt}+1$
\STATE $\textit{candidates} = \textit{universalCands} \textbf{\ if} \textit{\ universalMode} \textbf{\ else} \textit{\ localCands}$
\STATE {$QueryCount\leftarrow QueryCount+1$}
\STATE $r\leftarrow \text{GetMostBeneficialCandidateheuristic}(\textit{candidates},\mathcal{C})$
\STATE $\textbf{if} \textit{\ universalMode} \text{\ and AvgBenefit}(r) \leq 0.5 \textbf{\ then} \text{\ continue}$
\IF{$\text{OracleResponse(r)}$ is YES}
\STATE $R\leftarrow R\cup r$, \ \ $P\leftarrow P\cup C_r$
\STATE $\mathcal{C}\leftarrow \text{TrainClassifier}(P)$
\STATE $\textit{localCands} \leftarrow \textit{localCands} \setminus \{r\} \cup \text{Parents}(r)$
\ELSE
\STATE $\textit{localCandidates}\leftarrow \textit{localCandidates} \setminus \{r\} \cup \text{Children}(r)$
\ENDIF
\STATE $\textit{universalCands} \leftarrow \textit{universalCands} \setminus \{r\}$
\ENDWHILE
\RETURN $P, R$
}
\end{algorithmic}
\end{algorithm}

Our empirical evaluation shows that \hybrid\ formed by combining \universal\ and \babystep\  strategies, runs well on all
types of datasets even when the other two traversal algorithms struggle to
discover high-quality heuristics. In short, if the trained classifier is noisy (due
to lack of positive instances), \hybrid\ exploits the structure of the hierarchy
to search for precise heuristics. Similarly when no precise heuristics are found by
\babystep, it uses \universal's ability to generalize to other heuristics. 


\subsection{Score Update}
After a query is submitted to the oracle, \darwin\ passes the feedback to the \emph{score update} component to (1) re-train the classifier, (2)
re-evaluate the scores of all heuristics in the hierarchy, and (3) update the set of
positive instances (if feedback is positive) and signal the hierarchy generation
component to generate new candidate heuristics to be added to the hierarchy.

\subsection{Theoretical Analysis}
In this section we analyze the ability of \universal{} hierarchy traversal in
identifying positive sentences within a query budget $b$. For this analysis,
we consider a simple model capturing how positive and negative instances 
are scored by a classifier. An ideal classifier assigns a score of $1$
to positives and $0$ to negatives. However in practice, the scores follow
a different distribution which we model as follows. Let $P^*$
denote the collection of positive sentences in the corpus of sentences $S$.
We assume that a reasonable classifier assigns to a positive sentence $s\in P^*$ a
score larger than $\theta\ge 0.5$ with probability $\beta$, and less than $1-\theta$
otherwise. Similarly, the score of a negative sentence $s\in S\setminus P^*$ is above
$\theta$ with probability $\beta'$. Naturally, for a classifier is better than random
$\beta$ is larger then $\beta'$. Under this model, we can show that the set of
heuristics $R$ and the corresponding positive set $P$ identified by \universal\ are
constant approximation of the optimal solution. 

We also make a few assumptions about the hierarchy of heuristics. We assume that
the number of heuristics in the hierarchy $\mathcal{H}$ is linear to the number of
sentences (i.e., $O(n)$) and each heuristic has a minimum coverage of $\Omega(\log n)$.
This is a realistic assumption as we focus on heuristics that can be derived from their
context free grammar using a fixed number of steps, and our algorithm is aimed at
identifying heuristics that cover a large fraction of positives. Under this assumption,
we show that at any iteration, a heuristic $r$ chosen by \universal{} has coverage
$|C_r|$ larger than  $\alpha \times \max_{r'\in \mathcal{H}} |C_{r'}|$,
where $\alpha$ is a constant. This guarantees that \universal{} identifies at least
$\alpha P_{OPT}$ positives within a query budget of $b$, where $P_{OPT}$ is the total number of positives identified by an
ideal algorithm. To bound the estimated coverage of a heuristic $r$, we use the
Hoeffding's inequality~\cite{hoeffding1994probability}.

\spara{Notation.} We define a random variable $X_s$ which refers to the score assigned to a sentence $s$ and let $\mu_s$ denote its expected value. The benefit score of a heuristic $r$ is $\sum_{s\in C} X_s$ and its expected value is denoted by $\mu_r$.

\begin{lemma}
Given a heuristic function $r$ with coverage of $C_r$ and precision $p$, the expected score of the heuristic function is at least $\theta \beta' |C_r| $.\label{lem:exp1}
\end{lemma}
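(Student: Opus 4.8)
The plan is to compute $\mu_r$ directly through linearity of expectation and then lower-bound the per-sentence contributions using the scoring model. First I would write $\mu_r = \expectation\!\left[\sum_{s\in C_r} X_s\right] = \sum_{s\in C_r}\mu_s$, which reduces the task to bounding $\mu_s$ for each sentence covered by $r$.

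Next I would split $C_r$ according to ground truth. Since $r$ has precision $p$, exactly $p\,|C_r|$ of the covered sentences are genuine positives (those lying in $P^*$) and the remaining $(1-p)\,|C_r|$ are negatives. For a positive sentence $s\in C_r\cap P^*$, the model guarantees $X_s>\theta$ with probability $\beta$; because scores are non-negative, discarding the complementary event already gives $\mu_s\ge\beta\theta$. Symmetrically, for a negative sentence $s\in C_r\setminus P^*$ we have $X_s>\theta$ with probability $\beta'$, so $\mu_s\ge\beta'\theta$.

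Combining these per-sentence bounds yields $\mu_r \ge p\,|C_r|\,\beta\theta + (1-p)\,|C_r|\,\beta'\theta$. The final step invokes the standing assumption that the classifier is better than random, namely $\beta\ge\beta'$: replacing $\beta$ by $\beta'$ in the first term collapses the two contributions into $\mu_r \ge (p + (1-p))\,|C_r|\,\beta'\theta = \theta\beta'|C_r|$, which is exactly the claimed inequality. Equivalently, one can note that the uniform bound $\mu_s\ge\beta'\theta$ holds for \emph{every} $s\in C_r$ (for positives via $\beta\ge\beta'$), so the result follows by a single summation without even splitting on precision.

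The argument is a routine expectation computation, so I do not anticipate a genuine obstacle; the only two points requiring care are justifying that the complementary low-score event contributes non-negatively — which relies on the classifier producing non-negative (indeed $[0,1]$-valued) scores — and applying $\beta\ge\beta'$ at the last step to fold the positive and negative contributions into one uniform bound. I would state these two assumptions explicitly at the outset so that the chain of inequalities reads cleanly.
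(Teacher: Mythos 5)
Your proposal is correct and follows essentially the same route as the paper's proof: linearity of expectation, a split of $C_r$ into positives and negatives via the precision $p$, the per-sentence lower bounds $\mu_s\ge\theta\beta$ and $\mu_s\ge\theta\beta'$, and a final application of $\beta\ge\beta'$ to collapse both terms to $\theta\beta'|C_r|$. Your explicit remarks about non-negativity of the scores and the better-than-random assumption are the right points of care (the paper leaves them implicit), but there is no substantive difference in the argument.
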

\begin{proof}
Expected score of the heuristic function is 
\begin{eqnarray*}
E\left[\sum_{s\in C_r} X_s\right] &=& \sum_{s\in C_r} \mu_s= \sum_{s\in C_r\cap P^*} \mu_s +\sum_{s\in C_r\setminus P^*} \mu_r  \\
&\ge& \sum_{s\in C_r\cap P^*} \left(\theta \beta\right) +\sum_{s\in C_r\setminus P^*} \left( \theta \beta'\right)   \\
&=&  \left(\theta \beta\right)p|C_r|  + (1-p)|C_r|\left( \theta \beta'\right)   \\
&\ge&\theta \beta' |C_r|
\end{eqnarray*}
\end{proof}
We use this calculation to bound the score of a heuristic $r$ that has more than $\log n$ sentences.
\begin{lemma}
Consider a heuristic function $r$  with coverage $C_r$ such that  $|C_r|=c\log n$ sentences, where $c\geq \frac{2}{\epsilon^2\theta^2\beta'^2}$ is a constant. The benefit score of the heuristic is at least  $ (1-\epsilon) \theta \beta' |C_r|$ with a probability of $1-2/n^4$.\label{lem:s1}
\end{lemma}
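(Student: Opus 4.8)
The plan is to apply a lower-tail concentration bound to the benefit score $\sum_{s \in C_r} X_s$, using the lower bound on its expectation already established in Lemma~\ref{lem:exp1}. First I would observe that each score $X_s$ is a classifier output lying in $[0,1]$, and that under the scoring model the variables $\{X_s\}_{s \in C_r}$ are independent; this is exactly the setting in which Hoeffding's inequality~\cite{hoeffding1994probability} applies, with each term having range $1$, so that the sum of squared ranges is $|C_r|$.

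Next, writing $\mu_r = E[\sum_{s \in C_r} X_s]$, Lemma~\ref{lem:exp1} gives $\mu_r \ge \theta \beta' |C_r|$. The event that the benefit score falls below $(1-\epsilon)\theta\beta'|C_r|$ is therefore contained in the event that it deviates below its own mean by at least $\epsilon\theta\beta'|C_r|$, since $(1-\epsilon)\theta\beta'|C_r| \le \mu_r - \epsilon\theta\beta'|C_r|$ precisely because $\mu_r \ge \theta\beta'|C_r|$. Applying the one-sided Hoeffding bound with deviation $t = \epsilon\theta\beta'|C_r|$ then yields
\[
P\!\left(\sum_{s\in C_r} X_s < (1-\epsilon)\theta\beta'|C_r|\right) \le \exp\!\left(-\frac{2t^2}{|C_r|}\right) = \exp\!\left(-2\epsilon^2\theta^2\beta'^2\,|C_r|\right).
\]

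Finally I would substitute $|C_r| = c\log n$ with $c \ge \frac{2}{\epsilon^2\theta^2\beta'^2}$, which drives the exponent to at least $4\log n$ and bounds the failure probability by $n^{-4} \le 2n^{-4}$, giving the claimed success probability $1 - 2/n^4$ (the extra factor of $2$ is slack, left for consistency with later lemmas that invoke both tails). The substitution and simplification are the routine part. The step requiring the most care is the reduction from the additive deviation that Hoeffding controls to the multiplicative $(1-\epsilon)$ guarantee in the statement: this works only because Lemma~\ref{lem:exp1} lower-bounds the mean by the same quantity $\theta\beta'|C_r|$ that appears in the target, so the deviation budget $\epsilon\theta\beta'|C_r|$ matches exactly. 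The one modelling assumption I would flag explicitly is the independence of the per-sentence scores, without which Hoeffding would not directly apply.
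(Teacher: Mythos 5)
Your proposal is correct and follows essentially the same route as the paper: both invoke the lower bound $\mu_r \ge \theta\beta'|C_r|$ from Lemma~\ref{lem:exp1} and then apply Hoeffding's inequality to the sum $\sum_{s\in C_r}X_s$, with the choice $|C_r|=c\log n$, $c\ge \frac{2}{\epsilon^2\theta^2\beta'^2}$ driving the exponent to $4\log n$. The only (cosmetic) difference is that you measure the deviation as $\epsilon\theta\beta'|C_r|$ while the paper uses $\epsilon\mu_r$ and then lower-bounds $\mu_r$; your explicit remarks on the event containment and on the independence of the $X_s$ are points the paper leaves implicit.
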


\begin{proof}
The score of heuristic function $r$ is $\sum_{s\in C_r} X_s$. The expected value of the score (denoted by $\mu_r$) is calculated in lemma \ref{lem:exp1}. Using Hoeffding's inequality,
\begin{eqnarray*}
Pr\left[\frac{1}{|C_r|}\sum_{s\in C_r} X_s \leq  (1-\epsilon)\mu_r/|C_r| \right] &\leq& 2e^{-2\epsilon^2\mu_r^2/|C_r|}\\
&\le& 2e^{-2\epsilon^2\theta^2\beta'^2|C_r|}\\
&\le& 2e^{-4\log n}=\frac{2}{n^4}\\
\end{eqnarray*}
This shows that $\frac{1}{|C_r|}\sum_{s\in C_r} X_s$ is greater than $(1-\epsilon) \theta \beta'|C_r|$ with a probability more than $1-\frac{2}{n^4}$
\end{proof}
Using a similar analysis, we identify an upper bound of the heuristic score. Due to space constraint, we defer the proof to Appendix.

\begin{lemma}
Given a heuristic function $r$ with a coverage of $C_r$ and precision $p$, the expected score of the heuristic function is atmost $ \left( \beta+(1-\theta)(1-\beta)\right)|C_r|$.\label{lem:exp2}
\end{lemma}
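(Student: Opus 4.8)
The plan is to mirror the argument used for the lower bound in Lemma~\ref{lem:exp1}, replacing each lower estimate on the per-sentence expected score $\mu_s$ with the corresponding upper estimate. First I would decompose the expected benefit score by separating the positive and negative sentences covered by $r$:
\[
\mu_r = E\!\left[\sum_{s\in C_r} X_s\right] = \sum_{s\in C_r\cap P^*}\mu_s + \sum_{s\in C_r\setminus P^*}\mu_s .
\]
Since the precision is $p$, the first sum ranges over $p|C_r|$ positive sentences and the second over $(1-p)|C_r|$ negative sentences, so it suffices to produce a uniform per-sentence upper bound on $\mu_s$ for each group and then combine them.

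For a positive sentence $s\in P^*$, the scoring model places the score above $\theta$ with probability $\beta$ and below $1-\theta$ otherwise. Bounding the score by its maximum value $1$ on the high event and by $1-\theta$ on the low event yields $\mu_s \le \beta\cdot 1 + (1-\beta)(1-\theta) = \beta + (1-\theta)(1-\beta)$. Applying the same reasoning to a negative sentence, whose score lies above $\theta$ with probability $\beta'$, gives $\mu_s \le \beta' + (1-\theta)(1-\beta')$. Writing $h(x) = x + (1-\theta)(1-x) = (1-\theta) + \theta x$, observe that $h$ is increasing in $x$ because $\theta > 0$; since the classifier is better than random we have $\beta > \beta'$, and therefore the negative bound $h(\beta')$ is dominated by the positive bound $h(\beta)$.

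Combining these estimates, I would conclude
\[
\mu_r \le p|C_r|\,h(\beta) + (1-p)|C_r|\,h(\beta') \le |C_r|\,h(\beta) = \bigl(\beta + (1-\theta)(1-\beta)\bigr)|C_r|,
\]
which is exactly the claimed bound. The one subtle point, and the main obstacle, is the contribution of the negative sentences: the model as stated pins down only the upper-tail probability $\beta'$ for negatives, so to obtain the clean estimate $h(\beta')$ I must invoke the symmetric tail assumption that a negative score falls below $1-\theta$ on the complementary event (matching the positive case), rather than merely below $\theta$; the weaker reading would leave a residual gap of $\theta(1-\beta')$ that need not be dominated by $h(\beta)$. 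With the symmetric reading in place, the monotonicity of $h$ makes the positive sentences the worst case, and the weighted average over the covered set collapses to $h(\beta)|C_r|$ independently of the precision $p$, completing the argument.
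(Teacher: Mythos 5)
Your proof is correct and follows essentially the same route as the paper's: the same split of $C_r$ into positives and negatives, the same per-sentence upper bounds $\beta+(1-\theta)(1-\beta)$ and $\beta'+(1-\theta)(1-\beta')$, and the same final step using $\beta\ge\beta'$ to absorb the negative term. Your remark about needing the symmetric tail assumption for negative sentences (score below $1-\theta$ on the complementary event, not merely below $\theta$) is a genuine subtlety that the paper's proof uses implicitly without comment, and you are right that the weaker reading would not suffice.
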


\begin{lemma}
Consider a heuristic $r$  with coverage $C_r$ such that  $|C_r|=c\log n$ sentences, where $c\geq \frac{2}{\epsilon^2(\beta+(1-\theta)(1-\beta))^2}$ is a constant. The score of the heuristic is atmost  $ (1+\epsilon) \left( \beta+(1-\theta)(1-\beta)\right)|C_r|$ with a probability of $1-2/n^4$ \label{lem:s2}
\end{lemma}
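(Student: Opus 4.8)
The plan is to prove Lemma~\ref{lem:s2} as the upper-tail mirror of Lemma~\ref{lem:s1}, substituting the \emph{upper} bound on the expected per-instance score from Lemma~\ref{lem:exp2} for the lower bound used in Lemma~\ref{lem:s1}. Write $B = \beta + (1-\theta)(1-\beta)$ for the per-sentence coefficient, so that Lemma~\ref{lem:exp2} reads $\mu_r = E[\sum_{s\in C_r} X_s] \le B\,|C_r|$. Since each score $X_s$ lies in $[0,1]$ and the scores are treated as independent across the $|C_r|$ sentences (exactly as in Lemma~\ref{lem:s1}), the benefit $\sum_{s\in C_r} X_s$ is a sum of bounded independent variables, and Hoeffding's inequality applies verbatim.

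First I would invoke the upper-tail form of Hoeffding's inequality with an \emph{additive} deviation $\epsilon B\,|C_r|$ about the mean, yielding
\[
  Pr\left[\sum_{s\in C_r} X_s \ge \mu_r + \epsilon B\,|C_r|\right] \le 2\,e^{-2\epsilon^2 B^2 |C_r|}.
\]
Next, because $\mu_r \le B\,|C_r|$, the target event is nested inside this one, $\{\sum_{s\in C_r} X_s \ge (1+\epsilon)B\,|C_r|\} \subseteq \{\sum_{s\in C_r} X_s \ge \mu_r + \epsilon B\,|C_r|\}$, so the same bound controls the probability that the score exceeds $(1+\epsilon)B\,|C_r|$. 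Finally I would substitute $|C_r| = c\log n$ with $c \ge \frac{2}{\epsilon^2 B^2}$, so that $2\epsilon^2 B^2 |C_r| = 2\epsilon^2 B^2 c\log n \ge 4\log n$ and the failure probability is at most $2e^{-4\log n} = 2/n^4$; taking complements gives the claimed bound with probability at least $1-2/n^4$.

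The step I expect to be the main obstacle is selecting the correct form of the concentration bound. A literal mirror of Lemma~\ref{lem:s1} would use the \emph{multiplicative} deviation $(1+\epsilon)\mu_r$ and the exponent $2\epsilon^2 \mu_r^2/|C_r|$; but bounding that exponent from below needs a lower bound on $\mu_r$, whereas Lemma~\ref{lem:exp2} supplies only an upper bound, so this route stalls. Anchoring the additive deviation on $B\,|C_r|$ instead uses Lemma~\ref{lem:exp2} in the correct direction and reproduces precisely the constant $c \ge \frac{2}{\epsilon^2 B^2}$ appearing in the statement. The only remaining care is cosmetic: the one-sided upper tail already gives $e^{-2\epsilon^2 B^2 |C_r|} \le n^{-4}$, and the extra factor of two is kept solely to match the two-sided phrasing of Lemma~\ref{lem:s1}.
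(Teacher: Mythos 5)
Your proof is correct, and it is actually more careful than the one the paper gives. The paper's proof of Lemma~\ref{lem:s2} literally mirrors Lemma~\ref{lem:s1}: it applies Hoeffding with the multiplicative deviation $(1+\epsilon)\mu_r$, obtains the exponent $2\epsilon^2\mu_r^2/|C_r|$, and then silently replaces $\mu_r$ by $\left(\beta+(1-\theta)(1-\beta)\right)|C_r|$ in that exponent as if it were an equality. As you point out, that substitution goes the wrong way: turning $e^{-2\epsilon^2\mu_r^2/|C_r|}$ into $2/n^4$ requires a \emph{lower} bound on $\mu_r$, while Lemma~\ref{lem:exp2} supplies only an upper bound, so the paper's chain of inequalities does not actually close. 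Your fix --- anchoring an \emph{additive} deviation of $\epsilon B|C_r|$ at the deterministic quantity $B|C_r|$ with $B=\beta+(1-\theta)(1-\beta)$, noting the event inclusion $\{\sum_{s\in C_r}X_s\ge(1+\epsilon)B|C_r|\}\subseteq\{\sum_{s\in C_r}X_s\ge\mu_r+\epsilon B|C_r|\}$ from $\mu_r\le B|C_r|$, and only then invoking Hoeffding --- uses Lemma~\ref{lem:exp2} in the direction in which it is actually available, reproduces exactly the constant $c\ge\frac{2}{\epsilon^2 B^2}$ in the statement, and yields the claimed $1-2/n^4$ (indeed $1-1/n^4$ from the one-sided tail). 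In short: same ingredients (Hoeffding plus Lemma~\ref{lem:exp2}), but your rearrangement of where the deviation is centered is what makes the argument rigorous; the paper's version as written has a gap that your route repairs.
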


Using the calculated bounds on score of a heuristic, we evaluate the condition when a particular heuristic is preferred over the other.

\begin{lemma}
Given a pair of heuristic functions $r_1$ and $r_2$ with respective coverage $C_1$ and $C_2$. If $C_1$ has more positives than $C_2$, the \universal{} score of $r_1$ is higher than that of $r_2$ whenever $\frac{|C_1|}{|C_2|} \geq \alpha $ with a probability of $1-\frac{4}{n^4}$, where $\alpha$ is a constant.\label{lem:main}
\end{lemma}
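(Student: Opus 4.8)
The plan is to sandwich the two \universal{} scores between the concentration bounds already established and then read off the constant $\alpha$ from the resulting inequality. Since we only need to show that $r_1$ outscores $r_2$, I would \emph{lower}-bound the score of $r_1$ and \emph{upper}-bound the score of $r_2$, and then ask how large the coverage ratio $|C_1|/|C_2|$ must be for the lower bound to dominate the upper bound.

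First I would invoke the hierarchy assumption that every heuristic has coverage $\Omega(\log n)$, so that both $|C_1|$ and $|C_2|$ are at least $c\log n$ for a single constant $c$ chosen large enough to meet the hypotheses of both Lemma~\ref{lem:s1} and Lemma~\ref{lem:s2} (take the maximum of the two required constants). Applying Lemma~\ref{lem:s1} to $r_1$ gives that, with probability at least $1-2/n^4$, the score of $r_1$ is at least $(1-\epsilon)\theta\beta'|C_1|$. Applying Lemma~\ref{lem:s2} to $r_2$ gives that, with probability at least $1-2/n^4$, the score of $r_2$ is at most $(1+\epsilon)\left(\beta+(1-\theta)(1-\beta)\right)|C_2|$. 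A union bound then guarantees that both events hold simultaneously with probability at least $1-4/n^4$, matching the claimed failure probability.

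On this good event, the score of $r_1$ exceeds that of $r_2$ as soon as $(1-\epsilon)\theta\beta'|C_1| > (1+\epsilon)\left(\beta+(1-\theta)(1-\beta)\right)|C_2|$, i.e. whenever $\frac{|C_1|}{|C_2|} \geq \alpha$ with $\alpha = \frac{(1+\epsilon)\left(\beta+(1-\theta)(1-\beta)\right)}{(1-\epsilon)\theta\beta'}$ (the strict/non-strict boundary is immaterial after a harmless adjustment of $\epsilon$). Because $\epsilon,\theta,\beta,\beta'$ are all fixed model parameters, $\alpha$ is a constant, which is exactly the conclusion. The hypothesis that $C_1$ contains more positives than $C_2$ is what makes this the useful direction to bound: it certifies that ranking $r_1$ above $r_2$ is the correct decision, so the lemma says a coverage gap of a constant factor is enough for the \universal{} score to recover that ordering.

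I expect the main obstacle to be the bookkeeping around $\alpha$ rather than any deep step. One must check that $\alpha\ge 1$ so that the condition $|C_1|/|C_2|\ge\alpha$ is non-vacuous; this follows since $\beta+(1-\theta)(1-\beta)\ge\beta\ge\beta'\ge\theta\beta'$ and $(1+\epsilon)/(1-\epsilon)>1$. One must also confirm that assigning the lower bound to $r_1$ and the upper bound to $r_2$ is the only assignment that yields a separation (the reverse is useless), that the single constant $c$ can satisfy both coverage thresholds of Lemmas~\ref{lem:s1} and~\ref{lem:s2} at once, and that the plain union bound gives precisely the stated $1-4/n^4$.
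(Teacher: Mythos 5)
Your proof is correct and follows essentially the same route as the paper's: lower-bound the score of $r_1$ via Lemma~\ref{lem:s1}, upper-bound the score of $r_2$ via Lemma~\ref{lem:s2}, take a union bound for the $1-\frac{4}{n^4}$ probability, and solve for the ratio to obtain $\alpha = \frac{(1+\epsilon)\left(\beta+(1-\theta)(1-\beta)\right)}{(1-\epsilon)\theta\beta'}$. Your additional bookkeeping (checking $\alpha\ge 1$, choosing one constant $c$ satisfying both lemmas' hypotheses, and making the union bound explicit) is sound and in fact more careful than the paper's own write-up.
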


Using a similar analysis, we can calculate the estimated average probability of a heuristic. For a heuristic $r$ with precision $p_r$, we can show that it is considered for benefit calculation only when $p_r> \gamma$, where $\gamma$ is a constant.

\begin{theorem}
In worst case, \universal\ provides a constant approximation of Problem \ref{prob:cover} with a probability of $1-o(1)$
\end{theorem}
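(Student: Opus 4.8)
The plan is to recast \universal{} as an approximate greedy algorithm for the maximum-coverage objective of Problem~\ref{prob:cover}, and to argue that, with probability $1-o(1)$ over the classifier's random scores, every selection it makes captures new positives within a constant factor of the best possible marginal gain. First I would isolate the source of randomness: the only randomness is the score $X_s$ each sentence receives, so I begin by conditioning on the event $\mathcal{E}$ that the observed benefit score $\sum_s X_s$ of \emph{every} heuristic in $\mathcal{H}$ lies within a $(1\pm\epsilon)$ factor of its expectation. Since $\mathcal{H}$ contains $O(n)$ heuristics, each of marginal coverage $\Omega(\log n)$, Lemmas~\ref{lem:s1} and~\ref{lem:s2} bound the failure probability of each heuristic by $O(1/n^4)$; a union bound over all heuristics and all $b \le O(n)$ iterations (the classifier is retrained at most once per iteration) gives $\Pr[\mathcal{E}] \ge 1-o(1)$. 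The rest of the argument is deterministic once we condition on $\mathcal{E}$.

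Next I would establish the key per-iteration guarantee: at each step $t$, the heuristic $r_t$ selected by \universal{} (the maximum-score candidate passing the $\mathrm{AvgBenefit}>0.5$ filter) covers at least a constant fraction $\alpha'$ of the new positives covered by the best available heuristic $r_t^\star$. If $r_t$ already covers as many new positives as $r_t^\star$ we are done; otherwise $r_t^\star$ has strictly more positives than $r_t$, yet \universal{} preferred $r_t$, so $\mathrm{score}(r_t)\ge\mathrm{score}(r_t^\star)$. Applying the contrapositive of Lemma~\ref{lem:main} to the marginal coverage sets $C_r\setminus P_t$ then forces $|C_{r_t^\star}\setminus P_t| < \alpha\,|C_{r_t}\setminus P_t|$. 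Because $r_t$ passed the $\mathrm{AvgBenefit}>0.5$ filter, its marginal precision exceeds the constant $\gamma$ guaranteed earlier, so its new positives number at least $\gamma|C_{r_t}\setminus P_t|$, whereas those of $r_t^\star$ are at most $|C_{r_t^\star}\setminus P_t| < \alpha|C_{r_t}\setminus P_t|$. Combining these yields the per-iteration ratio $\alpha' = \gamma/\alpha$, a constant.

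With the per-iteration bound in hand, I would close with the standard greedy recurrence. Let $P_{OPT}$ be the positives captured by the optimal budget-$b$ solution and $|P_t|$ those captured by \universal{} after $t$ iterations. Since $P_{OPT}$ is covered by $b$ heuristics (assumed present in $\mathcal{H}$), at least one available heuristic covers $\ge (|P_{OPT}|-|P_t|)/b$ uncovered positives, so $r_t^\star$ does as well; the per-iteration guarantee then gives $|P_{t+1}|-|P_t| \ge \alpha'(|P_{OPT}|-|P_t|)/b$. Unrolling over the $b$ queries yields $|P_{OPT}|-|P_b| \le (1-\alpha'/b)^b|P_{OPT}| \le e^{-\alpha'}|P_{OPT}|$, hence $|P_b|\ge(1-e^{-\alpha'})|P_{OPT}|$. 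As maximizing captured positives is equivalent to maximizing recall for a fixed $|P^*|$, this is a constant-factor approximation of Problem~\ref{prob:cover}, holding with probability $1-o(1)$ since the deterministic argument is conditioned on $\mathcal{E}$.

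The step I expect to be the main obstacle is the second one: cleanly converting the \emph{total}-coverage ratio delivered by Lemma~\ref{lem:main} into a \emph{positive}-coverage ratio. This hinges on two points that need care. First, that the $\mathrm{AvgBenefit}>0.5$ filter genuinely certifies marginal precision above a fixed $\gamma$, and that this $\gamma$ is compatible with the oracle's acceptance threshold, so the high-scoring heuristics \universal{} selects are in fact returned as precise. Second, that the marginal coverage sets $C_r\setminus P_t$ retain the $\Omega(\log n)$ size required for Lemmas~\ref{lem:s1} and~\ref{lem:s2} to apply as $P_t$ grows. The cleanup phase that discards heuristics adding no new positives, together with restricting attention to heuristics of $\Omega(\log n)$ marginal coverage, is what I would use to discharge this second concern.
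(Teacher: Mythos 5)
Your proposal is correct in its essentials and rests on the same two pillars as the paper's proof --- Lemma~\ref{lem:main} to argue that any heuristic whose score beats a competitor's cannot have coverage smaller than a $1/\alpha$ fraction of that competitor's, and the constant precision lower bound $\gamma$ certified by the average-benefit filter to convert coverage into positives --- but you assemble them into a genuinely stronger final argument. The paper's proof is a \emph{single-iteration} comparison: it takes a union bound over $\binom{n}{2}$ pairs, concludes that \universal\ never selects a heuristic with coverage below $|C_{r_{OPT}}|/\alpha$ where $r_{OPT}$ is the single best heuristic, and declares the $\alpha\gamma|C_{r_{OPT}}|$ positives so obtained a constant approximation; it never runs the recurrence over the budget $b$, so strictly speaking it bounds performance against the best \emph{one} heuristic rather than against the optimal budget-$b$ solution that Problem~\ref{prob:cover} actually asks about. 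Your version closes exactly that gap by establishing a per-iteration marginal-gain ratio $\alpha'=\gamma/\alpha$ and unrolling the standard maximum-coverage recurrence to get $(1-e^{-\alpha'})|P_{OPT}|$, which is the statement one actually wants. The price you pay, which you correctly flag as the main obstacle, is that Lemmas~\ref{lem:s1}, \ref{lem:s2} and \ref{lem:main} are stated for the \emph{total} coverage $C_r$ while the algorithm's benefit and your recurrence are computed over the \emph{marginal} sets $C_r\setminus P_t$; you need these to stay $\Omega(\log n)$ and the precision filter to apply to them, neither of which is proved. Be aware that the paper silently has the same defect (its benefit is also defined as $\sum_{s\in C_r\setminus P}p_s$ yet its lemmas concentrate $\sum_{s\in C_r}X_s$), and additionally both arguments assume without proof that the oracle accepts every heuristic passing the $\gamma$-precision filter, since positives are only added to $P$ on a YES answer. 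So your route is more complete where it matters and no weaker anywhere else; the residual gaps you identify are inherited from the paper's setup rather than introduced by your proof.
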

\begin{proof}
In each iteration, \universal{} algorithm sorts each of the candidate heuristic based on estimated average probability of a randomly chosen sentence from $C_r$ to be positive. All these candidates have true precision $p_r>\gamma$. Given a pair of heuristics $r_1$ and $r_2$, using Lemma~\ref{lem:main}, the benefit score of a block $r_1$ is higher than that of $r_2$ whenever $\frac{|C_{r_1}|}{|C_{r_2}|} > \alpha$ with a probability of $1-\frac{4}{n^4}$. Let $r_{OPT}$ be the heuristic chosen by optimal algorithm. Using union bound over ${n\choose 2}$pairs of heuristics, with a probability of at least $ 1-\frac{4}{n^2}$, the estimated benefit of $r_{OPT}$ is higher than that of any $r'$ whenever $|C_{r'}|\leq |C_{r_{OPT}}|\alpha$. Therefore, \universal\  never chooses any block heuristic with coverage smaller than $|C_{r_{OPT}}|\alpha$ with a probability of $1-o(1)$.
This shows that the total number of positives identified by \universal are at least $|C_{r_{OPT}}|\alpha \gamma$, which is a constant approximation of $|C_{r_{OPT}}|$ with a probability of $1-o(1)$.
\end{proof}

Notice that analysis above makes certain assumptions about the quality of the classifier.
In the initial iterations of \darwin\ the classifier has a low recall and hence, the values of $\beta$ and $\theta$ are lower. As \darwin\ identifies new heuristic functions, the increase in training data pushes these values higher, thereby improving the approximation factor of our algorithm. It is important to mention that even when the classifier is not ideal, our only key
assumption is that the classifier would perform better than random.

\mpara{Discussion.} We proposed three techniques for hierarchy traversal. \universal{} approach is useful to capture holistic information about the different candidate labeling heuristics and is proven to achieve constant approximation of the optimal solution under reasonable assumptions of the trained classifier. However, due to lack of training data in initial iterations of the pipeline, this assumption may not hold and \universal{} does not perform optimally. However, \babystep{}  performs local generalization of identified heuristics to quickly increase the number of identified positives. \hybrid{} is a robust amalgamation of these two techniques and is recommendation. Since \hybrid{} is a generalization of \babystep{} and \universal{}, it is slightly less efficient than either of these. 


\vspace{-2mm}
\section{Experiments}
\label{sec:experiments}
In this section, we perform empirical evaluation of \darwin{} along with other baselines to validate the following.
\begin{compactitem}
    \item The ability of \darwin\ to identify majority of the positives even when initialized with a small seed set.
    \item The  positives identified by \darwin\ outperform other baseline techniques that use active learning, a human annotator or any other automated techniques. We show that \darwin\ can uncover
most of positive instances (i.e, 80\% or more) with roughly 100 queries.
    \item The heuristics identified by \darwin\ have high precision and help train a classifier with superior F-score ($\geq 0.8$).
    \item \darwin\ is highly efficient and can generate labels from a corpus of 1M sentences in less than 3 hrs. \darwin\ performance is resilient to variations in the seed set.
\end{compactitem}

\subsection{Experimental Setup}
Here, we describe the datasets, the baselines, and our overall experimental setup.

\begin{table}[t]
\small
\begin{center}
\begin{tabular}{||c | c | c |c||} 
\hline
\textbf{dataset} & \# Sentences & \% Positives & Labeling  \\ [0.5ex] 
\hline\hline
\texttt{cause-effect} & 10.7K & 12.2 & Relations \\ 
\hline
\texttt{musicians} & 15.8K & 10& Entities \\
\hline 
\texttt{directions} & 15.3K & 3.8& Intents \\ 
\hline
\texttt{profession} & 1M & 1.1& Entities  \\
\hline
\texttt{tweets} & 2130 & 11.4 (Food) & Intents \\
\hline
\end{tabular}
\end{center}
\vspace{-2em}
\caption{Dataset statistics \label{tab:datasets}}
\end{table}

\noindent\textbf{Datasets.} We experimented with five diverse real-world datasets
each suitable for one of the following NLP tasks: entity extraction, relationship
extraction, and intent classification. Table~\ref{tab:datasets} summarizes the
statistics of these datasets. All datasets, except for \texttt{directions}, come
with ground-truth labels which we use for evaluation and to synthesize the responses
from an oracle. For the \texttt{directions} datasets, we rely on human annotators to generate the gold standard and validate
the heuristics. 
We describe each of these dataset below.
\begin{compactitem}
    \item \texttt{cause-effect}~\cite{semeval} is a dataset commonly used as a
    benchmark for relationship extraction between pairs of entities. We focus on the
    task of finding sentences that describe a cause and effect relationship between
    two entities.
    \item \texttt{directions} is an internal dataset described in
    Example~\ref{ex:ex1}. For this dataset, we leveraged
    Figure-eight\footnote{https://figure-eight.com} crowd workers to verify the
    heuristics generated by \darwin.
    \item \texttt{musicians} dataset consists of sentences from Wikipedia articles. The task is entity extraction with the goal to extract
    the names of musicians. The ground-truth is obtained with the help of
    NELL's knowledge-base\footnote{\url{http://rtw.ml.cmu.edu/rtw/kbbrowser/}}. 
    \item \texttt{professions} dataset is a collection of sentences
    from ClueWeb\footnote{\url{https://lemurproject.org/clueweb09/}}. The sentences  that
    mention various professions (e.g., scientist, teacher, etc.) are positives. The ground truth is generated using NELL's knowledge-base. 
    \item \texttt{tweets}~\cite{aaai15tweets} data set is a benchmark for classifying
    the intent of tweets into predefined categories such as food, travel and career, etc. 
\end{compactitem}

\smallskip
\noindent \textbf{Baselines.} We evaluate our framework on two fronts: (1) the ratio of
positive instances it discovers (i.e. coverage) and (2) the performance of the classifier
trained using our weakly-supervised labels. Our baselines for these two evaluation criteria
are listed below. 
\begin{compactitem}
    \item Section~\ref{sec:Snuba} compares the fraction of positives identified with \texttt{Snuba}\cite{Snuba}. In this experiment, we consider a small sample of positives chosen randomly from the dataset.  
    \item Section~\ref{sec:rulecoverage} compares the coverage obtained by \darwin\ against
    two baselines, namely  \texttt{HighP} and \texttt{HighC}.  \texttt{HighP} is a simpler
    version of \darwin\ which selects the rule which is expected to have a high
    precision (according to the classifier) and submits it to the oracle.
    On the other hand, \texttt{HighC} selects rules with the maximum coverage,
    irrespective of their expected precision\footnote{\texttt{HighC}'s performance
    was quite poor as most of its suggested rules are rejected by the oracle. As
    a result, we omit HighC from the plots for the sake of clarity.}.
    \item Section \ref{sec:classifierquality} compares the F-score of the classifier
    generated by \darwin\ with an \emph{Active Learning} (\texttt{AL}) \cite{activelearning}
    and a \emph{Keyword Sampling} (\texttt{KS}) technique as well as the \texttt{HighP}
    baseline mentioned earlier. \texttt{AL} improves its performance by selecting
    the instance with the highest entropy and asking the oracle for its label.
    It then re-trains the classifier using the new label.
    The \texttt{KS} approach is designed to check if we can quickly obtain a small
    set of promising instances by filtering the corpus using a set of relevant keywords,
    and label the instances in the smaller set. To do so, we asked annotators to provide 10
    distinct keywords as a heuristic to filter the dataset. The \texttt{KS} technique
    randomly samples instances from the filtered dataset and asks for its label.
    We employ the same deep learning based classifier for all the techniques. 
\end{compactitem}
Finally, note that \darwin\ can use different traversal algorithms: \babystep,
\universal, and \hybrid, which we refer to as 
\darwinp{LS}, \darwinp{US}, and \darwinp{HS} respectively.

\smallskip
\noindent\textbf{Settings.} 
We implemented all proposed algorithms and baselines in Python and ran
the experiments on a server with a 500GB RAM and 64 core 2.10GHz
x 2 processors. The dependency parse trees and the POS tags are
generated with SpaCy\footnote{\url{https://spacy.io/}}. All text
classifiers trained in our experiments (whether used by \darwin\ or
other baselines) are implemented with a 3-layer convolutional neural
network followed by two fully connected layers, following the
architecture described by Kim et al~\cite{kim14convolutional}.
The input to the classifier is a matrix created by stacking the
word-embedding vectors of the words appearing in the sentence. We also
used SpaCy's word-embeddings for
English\footnote{\url{https://spacy.io/models/en\#en_core_web_lg}}.
For generating
derivation sketches, the maximum depth is set to 10 and we consider 10K heuristics in candidate selection.
When simulating the responses from an oracle (using the
ground-truth data), we respond YES to heuristic $h$ if at least
80\% of its coverage set consist of positive instances.

\begin{figure}
\centering
\vspace{-.2em}
\subfloat[\texttt{directions}\label{fig:direc_Snuba}]{\includegraphics[width=0.24\textwidth]{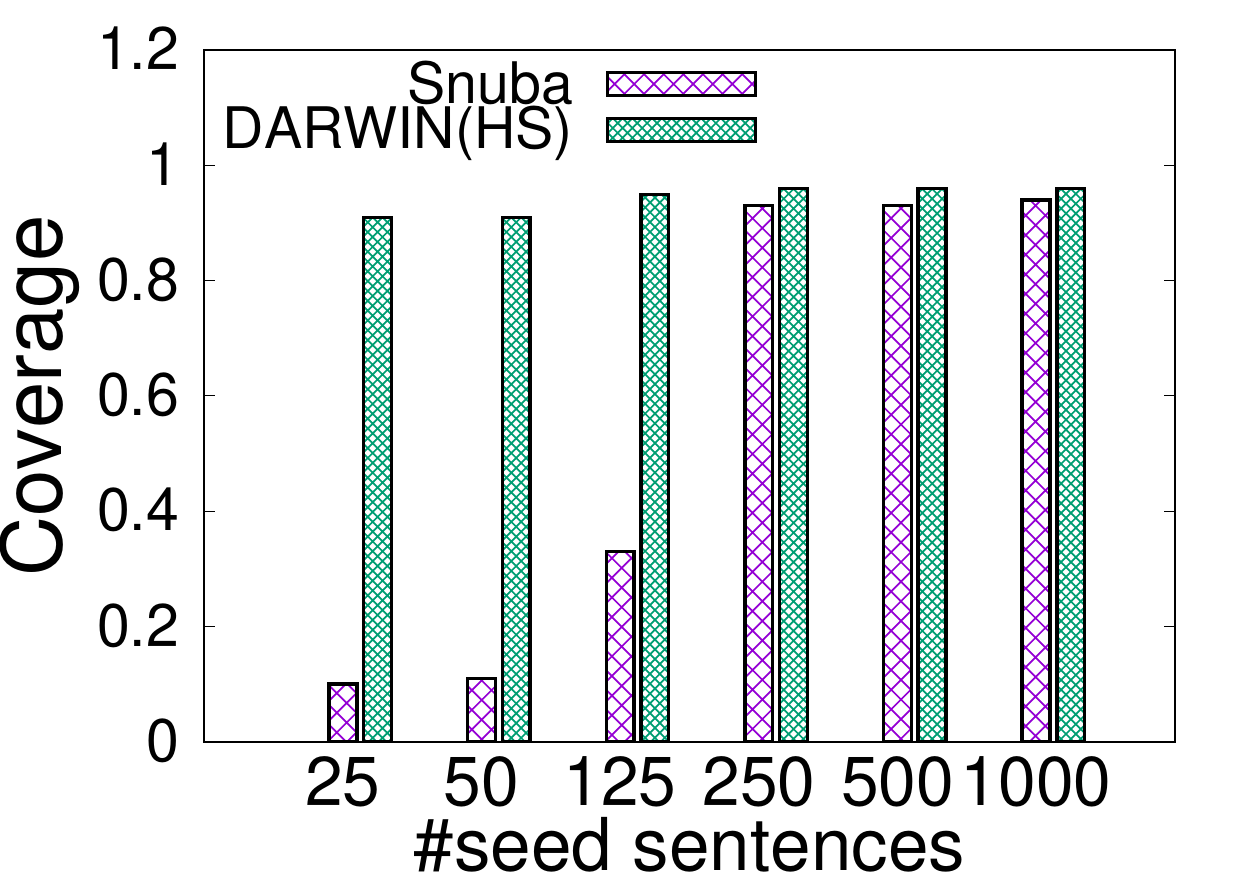}} ~
\subfloat[\texttt{musicians}\label{fig:music_Snuba}]{\includegraphics[width=0.24\textwidth]{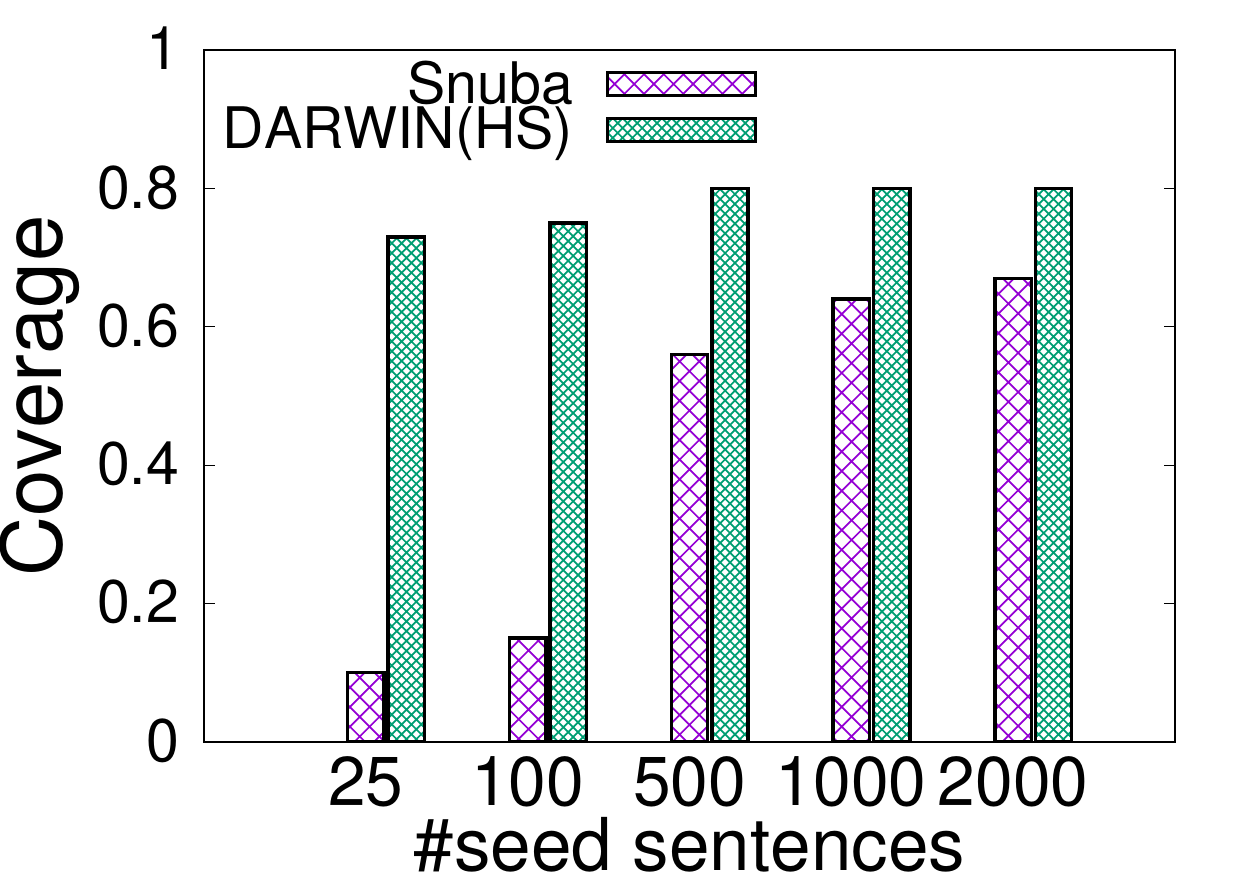}}
\caption{Effects of seed set size on performance.\label{fig:Snuba_random}}
\vspace{-1.2em}
\end{figure}

\begin{figure}
\centering
\subfloat[\texttt{directions}\label{fig:direc_Snuba}]{\includegraphics[width=0.24\textwidth]{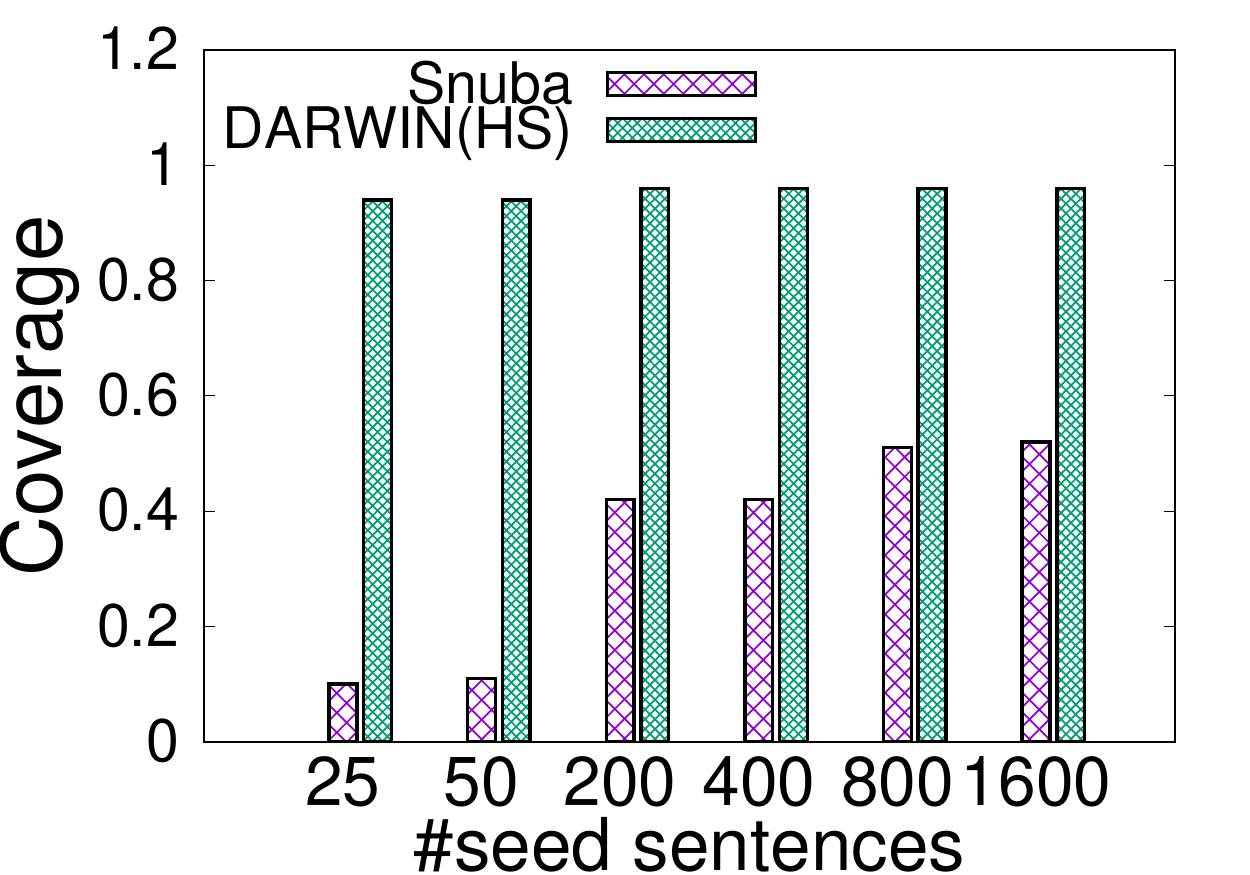}} ~
\subfloat[\texttt{musicians}\label{fig:music_Snuba}]{\includegraphics[width=0.24\textwidth]{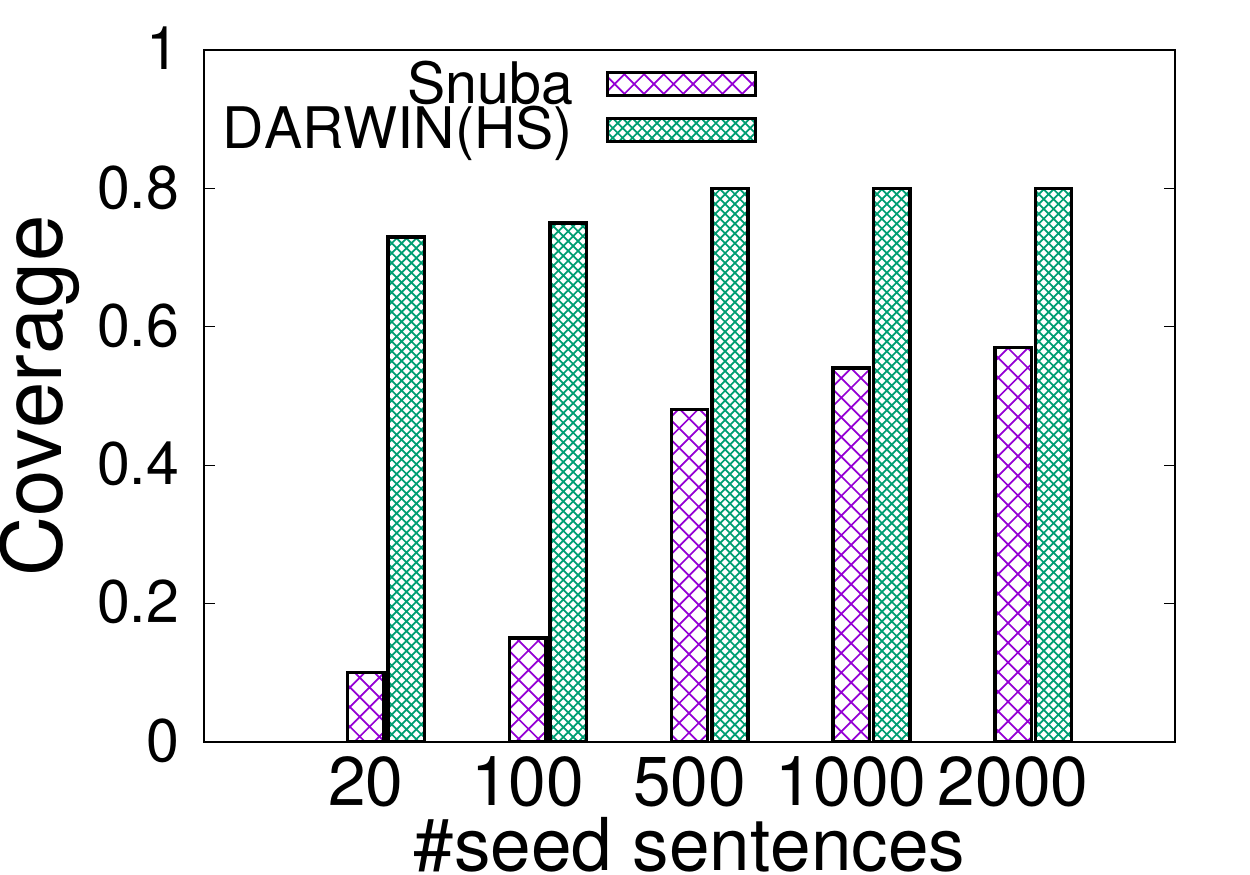}}
\caption{Effects of biased seed set size on performance.\label{fig:Snuba_biased}}
\vspace{-0.9em}
\end{figure}

\begin{figure*}[t]
\centering
\includegraphics[width=\textwidth]{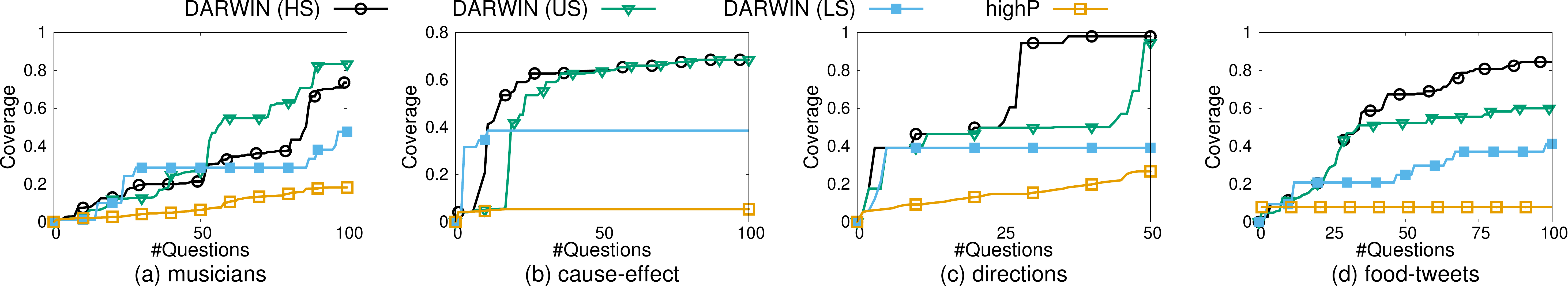}
\includegraphics[width=\textwidth]{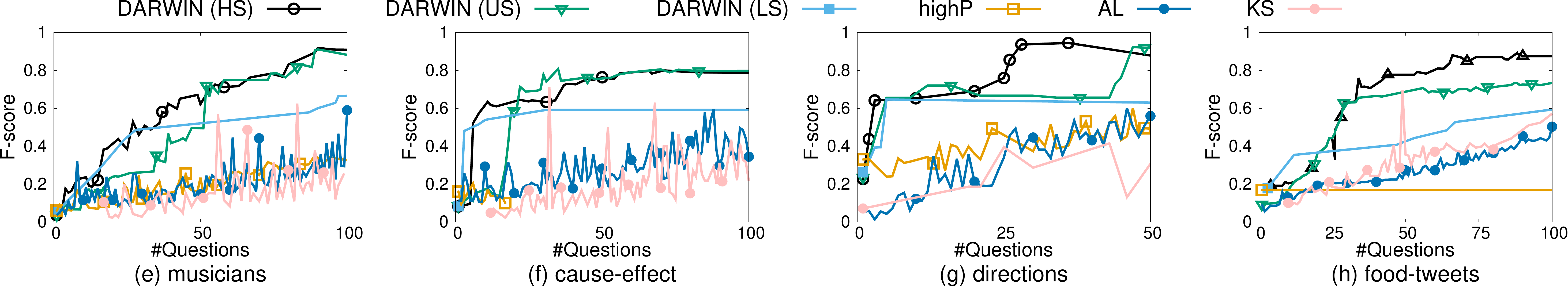}
\caption{Comparison of rule coverage and classifier's F-score for \darwin{} based pipelines on various datasets. \label{fig:overall}}
\vspace{-0.9em}
\end{figure*}
\begin{figure}[t]
\centering
\includegraphics[width=0.48\textwidth]{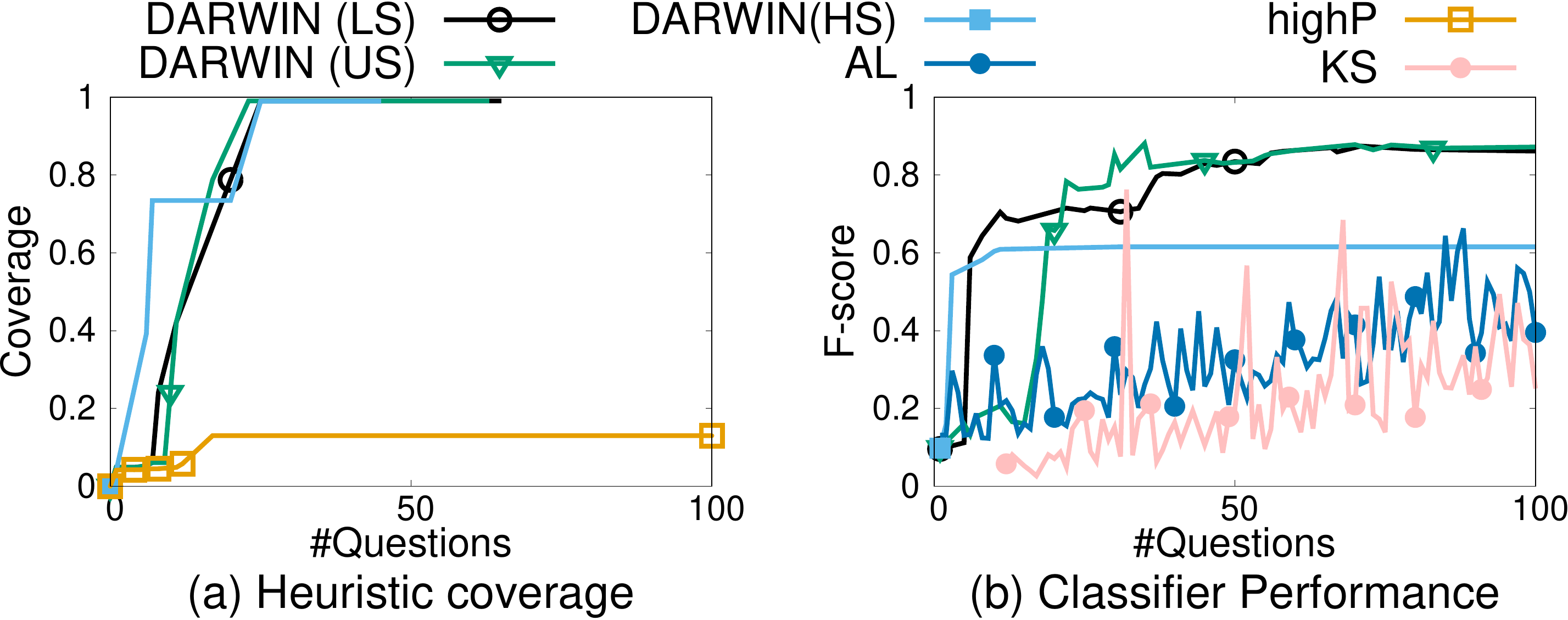}
\caption{Comparison for \texttt{profession}.\label{fig:profession}}
\vspace{-0.5em}
\end{figure}

\subsection{Comparison with Snuba\label{sec:Snuba}}

In this experiment, we initialize \texttt{Snuba} and \darwinp{HS} with the same set of randomly chosen labeled sentences and compare the total number of positives identified by each of the techniques\footnote{In this experiment, we do not start with a single labeling heuristic as \texttt{Snuba} achieves a very small coverage as it fails to obtain enough positive instance
due to the high degree of imbalance in these datasets.}. Note that \texttt{Snuba} does not
query the oracle and infers heuristics based on the provided labeled instances may infer
inaccurate heuristics. For a fair evaluation, we choose not compare the accuracy of identified heuristics. Figure~\ref{fig:Snuba_random} shows the change in fraction of identified positives by
varying the size of the initial seed set. \darwinp{HS} is able to identify majority of positives even when the pipeline is initialized with less than $25$ sentences. However, \texttt{Snuba} requires at least $200$ randomly chosen sentences for \texttt{directions} and $1000$ for \texttt{musicians}. If we employ expert to sample positives, \texttt{Snuba} requires at least 100 positive samples in \texttt{musicians}.

To further evaluate the generalizability of \texttt{Snuba} and \darwinp{HS} to identify heuristics that have limited or no evidence in the initial seed set, we construct a biased sample of seed positives. In this experiment, we choose sentences randomly from the corpus after ignoring the ones that contain the token `shuttle' in \texttt{directions} dataset and `composer' in \texttt{musicians}. Figure~\ref{fig:Snuba_biased} shows the fraction of positives identified with varying size of the seed set. \texttt{Snuba} is not able to identify the positives that contain the token `shuttle' in \texttt{directions} and `composer' in \texttt{musicians}. Henceforth, it achieves poor coverage over the positives in two datasets. \darwin{HS} is able to identify majority of positives irrespective of the number of sentences used to initialize the pipeline. \texttt{Snuba} requires considerably more labelled sentences in \texttt{musicians} as compared to \texttt{directions} due to the presence of many diverse heuristics in the dataset, 
most of which have limited evidence in the seed subset. We observe similar performance gap between \texttt{Snuba} and \darwinp{HS} for other datasets. 

This experiment validates that \texttt{Snuba} works well when the initial  seed set has enough randomly chosen positives and lacks the ability to generalize to heuristics that have limited evidence. On the other hand, \darwinp{HS} is able to identify majority of the positives even when the pipeline is initialized with just $25$ sentences and has good generalizability. To further evaluate \darwin{}'s ability to identify positives, the following subsection considers a more challenging scenario where the pipeline is initialized with a single labeling heuristic or just two positive sentences.

\subsection{Rule Coverage\label{sec:rulecoverage}}
Figures 6a-6d and 10a illustrate the fraction of positives identified by \darwin\ and our baselines. We can observe that \darwinp{HS} has the most stable
performance and outperforms other techniques. While \darwinp{US} occasionally
outperforms \darwinp{HS}, we observe that it fails to perform well on all datasets.
In most cases (with an exception of \texttt{cause-effect}), the \darwinp{HS}
achieves a coverage of 0.8 using less than 120 queries to the oracle.
The \texttt{cause-effect} is known to be a tough benchmark in the NLP community
as the best F-score reported by \cite{semeval}
is 82\% given complete access to the training set. Assuming
that the oracle considers a majority vote by querying three crowd members and
each query costs 2 cents\footnote{These are standard assumptions in crowdsourcing
platforms eg. figure-eight. We used the same cost model to collect labels for
\texttt{directions}.}, the \darwinp{HS} pipeline generates more than 80\% of the
positive labels with only $\$7.20$.  Figure 6d demonstrates the
behavior for `Food' intent in the tweets. We observed similar behavior  for `Travel'
and `Career' intents on this data set. We can observe that the other baselines do
not perform well compared to \darwin; The \texttt{highP} identifies heuristics with
very small coverage as its candidates. Also note that the \darwinp{LS} algorithm
shows a high progressive coverage initially but it converges to a very low coverage
value because it is unable to identify rules that are semantically similar, but
 far away in the hierarchy. { Overall, we recommend \darwinp{HS} for any practical application as it is more robust and works better than most of the techniques. On the other hand,  \darwinp{LS} and \darwinp{US} variants work well in specific settings.
 \darwinp{LS} performs better than the other techniques when precise rules are present close to each other in the hierarchy and  \darwinp{US} performs well in the presence of abundant labelled examples.}

Figure \ref{fig:traversalex} shows some the heuristics which are queried by the
\darwinp{HS} algorithm. In the \texttt{directions} example, \darwinp{HS} started with
`\emph{best way to get to}' and was able to traverse to `\emph{shuttle to}', which
is quite distinct from the initial seed rule. The choice of `\emph{to the hotel
from}' by the algorithm provides some evidence that `\emph{shuttle to}' is also
a good rule since the phrases often co-occur together in positive instances. In
the \texttt{cause-effect} example, the traversal is relatively simple as the
algorithm generalizes the initial rule first and as soon as it reaches the noisy
and unhelpful rule `\emph{by}', it again specializes to `\emph{triggered by}' which
is again a precise rule.  In addition to these simpler heuristics, \darwin{} identified more complex heuristics for \texttt{professions} like `/is/NOUN $\wedge$ job', among others.

\begin{figure}[t]
\centering
\includegraphics[scale=0.32]{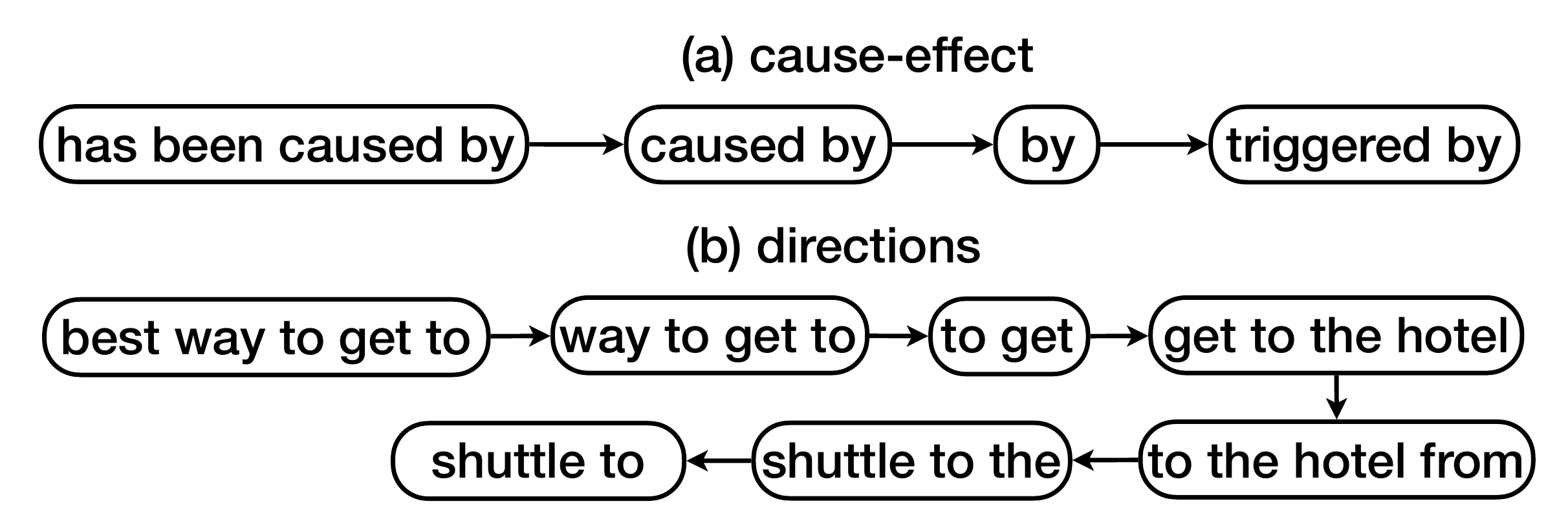}
\caption{Example of traversals by \hybrid{} algorithm on
two datasets.\label{fig:traversalex}}
\vspace{-0.5em}
\end{figure}


\vspace{-2mm}
\subsection{Quality of the Classifier\label{sec:classifierquality}}
This section compares the quality of the classifier generated using the labels 
identified by \darwin{}. Figure 6e-6h and 10b show that \darwinp{HS} dominates other techniques over all the datasets. The active learning technique suffers
from poor F-score initially and improves gradually. Since AL generates very few
training examples, the trained classifier is highly unstable and shows jittery
F-score.  The KS approach shows similar performance and performs comparable to AL.
On the other hand,
\darwin{} based pipelines are much more stable in terms of F-score. The classifier
that was trained with the labeled data generated by \darwin{} pipelines always
maintains  a high precision. It is interesting to note that \cite{aaai15tweets}
reports the maximum F-score for food intent to be 0.54, as compared to 0.84 by
\darwin{}. The classifier generated by \darwin{} achieved an F-score above 
0.8 for other intents like `Travel' and `Career' too while \cite{aaai15tweets} reports a maximum F-score of 0.58 for these intents.

\vspace{-2.5mm}
\subsection{Additional Experiments}
To provide better insights into \darwin's performance, we have conducted a series of experiments
to evaluate (1) how efficient the framework is in terms of the time required to obtain labels,
(2) how much noise-aware models (trained by Snorkel) can improve the classification results,
(3) how well human annotators approximate our notation oracles.
Due to space limitations, we present the effect of varying seed rules and parameters in the Appendix.

\smallskip
\noindent \textbf{Efficiency in Label Collection.}
As we demonstrated, \darwin{} identifies majority of the positive 
instances
in all the datasets using roughly 100 queries. The time taken to generate the
index structure for all the datasets was less than 5 minutes. The hierarchy
generation phase then iterates over the index to identify the candidate
rules. This phase takes less  15 minutes for a corpus of 100K. 

Since the \babystep{} algorithm does not require the index to be
pre-computed and generates candidates on the fly, it runs in less
than 45 minutes for all datasets. \hybrid{} and \universal{}
traversal algorithms require 60-90 minutes on smaller datasets
(i.e., \texttt{directions}, \texttt{musicians} and
\texttt{cause-effect}) and about 2 hour and 45 minutes on
\texttt{professions}. The major bottleneck in this process is the time
taken by the classifier to make a prediction for 
all instances in the corpus (It takes roughly 25 minutes for one round
of training and testing on the \texttt{professions} dataset).
We implemented a simple optimization where we
evaluated a sentences only if it had a confidence score more than 0.3 in
the previous iteration and only evaluated instances that didn't
satisfy this constraint once every three iterations. This heuristic
helped us reduce the running time from 2 hours and 45 minutes to 65
minutes for the \texttt{professions} dataset. The total running time does not grow linearly with the size of dataset because most of the components use the classifier to identify the positives. These candidate positives are used for hierarchy generation and traversal. Hence, the running time grows linearly with positive set size but not with dataset size (after using the above mentioned optimization).

\smallskip
\vspace{-1mm}
\noindent \textbf{Training noise-aware classifiers.}
One of the recent developments in weak-supervision paradigms has been emergence of 
frameworks such as \emph{Snorkel}~\cite{snorkel} which are designed to de-noise the
generated labels and train noise-aware classifiers. In this experiment, we direct the
set of rules identified by \darwin\ to Snorkel, and compare the quality of the noise-aware
classifier against a classifier trained directly on the labels generated by \darwin.
Table \ref{tab:snorkel} summarizes  the F-score that two classifiers have obtained on 
our datasets. We can observe that in most cases, using Snorkel does not yield any improvements.
This is mainly because in many of these datasets, the rules generated by \darwin\ already exhibit
a low degree of noise and a good coverage and thus there is almost no room for improvements.
Nevertheless, we can see that on some datasets such as \texttt{directions} using Snorkel can be
quite beneficial.

\begin{table}[t]
\small
\begin{center}
\begin{tabular}{c|cccc} 
\multicolumn{1}{c}{} &\texttt{M}& \texttt{C} &
\texttt{D} & \texttt{F} \\ [0.5ex] 
\hline\hline
\darwin{} & 0.91 & 0.79 & 0.89 & 0.87 \\ 
\hline
\darwin{}+Snorkel & 0.82 & 0.78 & 0.97 & 0.87 \\
\hline 
\end{tabular}
\end{center}
\vspace{-0.8em}
\caption{Performance of \darwin{} with Snorkel
(\texttt{M=musicians,C=cause-effect,D=directions,F=food-tweet})\label{tab:snorkel}}
\end{table}

\smallskip
\noindent \textbf{Performance of human annotators.}
Clearly, \darwin's performance heavily relies on the quality of responses it receives
from the annotators. To study how well human annotators perform, we ran an experimental
study on Figure-eight crowd-sourcing platform for \texttt{directions} dataset. Labels were
collected for 2\,600 heuristics. 
Each annotator was paid 2 cents per a single rule evaluation and three evaluations per
rule were collected.
A manual inspection of the results reveals that annotators were able to capture most of
the precise heuristics such as `best way to get there', `shuttle from', `across the street from',
`airport to hotel', and etc. Overall, we found less than 10 false positives responses in
the 69 positive heuristics identified by the crowd labels. These erroneous responses were due
to the fact that the 5 matching sentences presented to the annotator sometimes can have 3 or
4 positive instances by chance which confuses the annotators. Presenting more
samples lowers the error rate. Interestingly, \darwin\ often rates these heuristics lower in
preference to query as it can analyze the complete coverage set, and mitigate such errors
by considering the entire distribution of instances. The annotators took 23 sec on average to label a heuristic query. For 100 queries, \darwin\ generates all the labels in less than 40 min
of human effort. This time can further be reduced by asking various questions in
parallel to different crowd members.

\vspace{-2mm}
\section{Related work}
\label{sec:related}
To the best of our knowledge, \darwin\ is the first system that
assists annotators to discover rules under any desired rule
grammar for rapid labeling of text data. Our work is related to
studies in areas of weak supervision, crowdsourcing, and the
intersection of the two which we discuss next.

\smallskip
\noindent \textbf{Weak Supervision.}
There are multiple existing approaches for generating labels in
weakly supervised settings. Some techniques rely on the notion
of \emph{distant supervision} where the labels are inferred using
an external knowledge
base~\cite{mintz09distant,alfonseca12pattern,kb1}. One notable
example is a system named Snuba \cite{Snuba} which generates
labeling rules based on an existing labeled dataset. In contrast
to these systems, \darwin\ is designed for scenarios where no
additional sources of information are available. In such cases,
it is necessary to rely on annotators to write labeling rules.
While using expert-written rules have proven to be highly effective
in many settings~\cite{snorkel}, there is limited work on how to
facilitate the process of writing or discovering high-quality rules.
One interesting example is Babble Labble\cite{snorkel2}, a
labeling tool that allows annotators to explain (in natural
language form) why they have assigned a label to a given
data point. These explanations are then transformed into
labeling rules. While Babble Labble simplifies the rule writing
process, it only handles a single internal rule language. On the
other hand, \darwin\ allows experts to pick their desired rule
language depending on the complexity and the dynamics of the
task at hand. 

There have been several studies on utilizing the weakly-supervised
labels in an optimal way. Snorkel~\cite{snorkel} and
Coral~\cite{varma17inferring} are recent examples of systems
(based on the data programming paradigm) that de-noise and utilize
the labels collected via weak supervision.
Similarly, there are numerous data management problems spanning
data fusion \cite{fusion1,fusion2} and truth discovery
\cite{truth}, which focus on identifying reliable sources
of data. Many recent studies in data integration have also
explored techniques that handle error in crowd answers
\cite{crowder1,crowder2}. Note that \darwin\ is a framework
for discovering labeling rules which goes hand-in-hand with
the aforementioned systems since \darwin's generated rules can
be further processed using these de-noising techniques to
achieve better results.

\smallskip
\noindent \textbf{Crowdsourcing Frameworks.}
There has been many studies on devising oracle based abstractions
that handle annotations from a crowd and minimize the noise
in answers \cite{crowd_error,crowd_error2}. Perhaps, more relevant
to our work, are existing studies on how labeling rules can
be verified with the help of the crowd. One recent example is
a system named CrowdGame \cite{crowdgame} which validates
a rule by showing either the rule or its matching instances to
the annotators. The authors demonstrate that their
proposed game-based techniques yields the best
results for rule verification. Unlike \darwin\, CrowdGrame
assumes a pre-existing (manageable) set of possible rules from
which the best rule should be selected. \darwin, on the other
hand, has no such assumption and has to create a promising
set of rules from the rule grammar. Additionally, the
game-based approach to annotate a rule can be modeled as an Oracle
in \darwin.

\vspace{-2mm}
\section{conclusion}
\label{sec:conclusion}
We present \darwin{}, an interactive end-to-end system that enables annotators to
rapidly label text datasets by identifying precise labeling rules for the task at hand.
\darwin{} compiles the semantic and syntactic patterns in the corpus to generate
a set of candidate heuristics that are highly likely to capture the positives instances
in the corpus. The set of candidate heuristics are organized into a hierarchy which
enables \darwin\ to quickly determine which heuristic should be presented to the annotators 
for verification. Our experiments demonstrate the superior performance of \darwin{}
in wide range of labeling tasks spanning intent classification, entity extraction
and relationship extraction.

\newpage
\bibliographystyle{abbrv}
\bibliography{darwinref}

\clearpage
\appendix
\section{Proof of Lemma 4}
\begin{proof}
Expected score of the heuristic function is 
\begin{eqnarray*}
E[\sum_{s\in C_r} X_s] &=& \sum_{s\in C_r} \mu_s\\
&=& \sum_{s\in C_r\cap P} \mu_s +\sum_{s\in C_r\setminus P} \mu_s  \\
&\leq & \sum_{s\in C_r\cap P} (\beta+(1-\theta)(1-\beta)) \\
&&+\sum_{s\in C_r\setminus P} \left( \beta'+(1-\theta)(1-\beta')\right)   \\
&=&  \left( \beta+(1-\theta)(1-\beta)\right)p|C_r| \\
&& + (1-p)|C_r|\left( \beta'+(1-\theta)(1-\beta')\right)   \\
&\le&\left( \beta+(1-\theta)(1-\beta)\right)|C_r|
\end{eqnarray*}
\end{proof}

\section{Proof of Lemma 5}
\begin{proof}
The score of heuristic function $r$ is $\sum_{s\in C_r} X_s$. The expected value of the score is calculated in lemma \ref{lem:exp2}. Using Hoeffding's inequality,
\begin{eqnarray*}
Pr[\frac{1}{|C_r|}\sum_{s\in C_r} X_s \leq  (1+\epsilon)\mu_r/|C_r| ] &\leq& 2e^{-2\epsilon^2\mu_r^2/|C_r|}\\
&=& 2e^{-2\epsilon^2 (\beta+(1-\theta)(1-\beta))^2 |C_r|}\\
&=& 2e^{-4\log n}=\frac{2}{n^4}\\
\end{eqnarray*}
This shows that $\frac{1}{|C_r|}\sum_{s\in C_r} X_s$ is smaller than $(1+\epsilon) (\beta+(1-\theta)(1-\beta))|C_r|$ with a probability more than $1-\frac{2}{n^4}$
\end{proof}

\section{Proof of Lemma 6}

\begin{proof}
Using Lemma~\ref{lem:s1} and \ref{lem:s2}, we can observe that the calculated benefit of heuristic $h_1$ is atleast $(1-\epsilon) \theta \beta' |C_{r_1}| $ with a probability of $1-\frac{2}{n^4}$. Similarly, the score of $r_2$ is atmost $ (1+\epsilon) \left( \beta+(1-\theta)(1-\beta)\right)|C_{r_2}|$ with a probability of $1-\frac{2}{n^4}$.
This shows that 
\begin{eqnarray}
score(r_1)&>&score(r_2)\\
(1-\epsilon) \theta \beta' |C_{r_1}|  &>& (1+\epsilon) \left( \beta+(1-\theta)(1-\beta)\right)|C_{r_2}|\\
\frac{|C_{r_1}|}{|C_{r_2}|} &>& \frac{(1+\epsilon)\left( \beta+(1-\theta)(1-\beta)\right)}{(1-\epsilon)\theta \beta' } \\
\frac{|C_{r_1}|}{|C_{r_2}|} &>& \alpha
\end{eqnarray}
where $\alpha$ is a constant.
\end{proof}
\section{Additional Experiments\label{sec:appendixa}}

\begin{figure}
\centering
\subfloat[\texttt{musicians}\label{fig:tau}]{\includegraphics[width=0.24\textwidth]{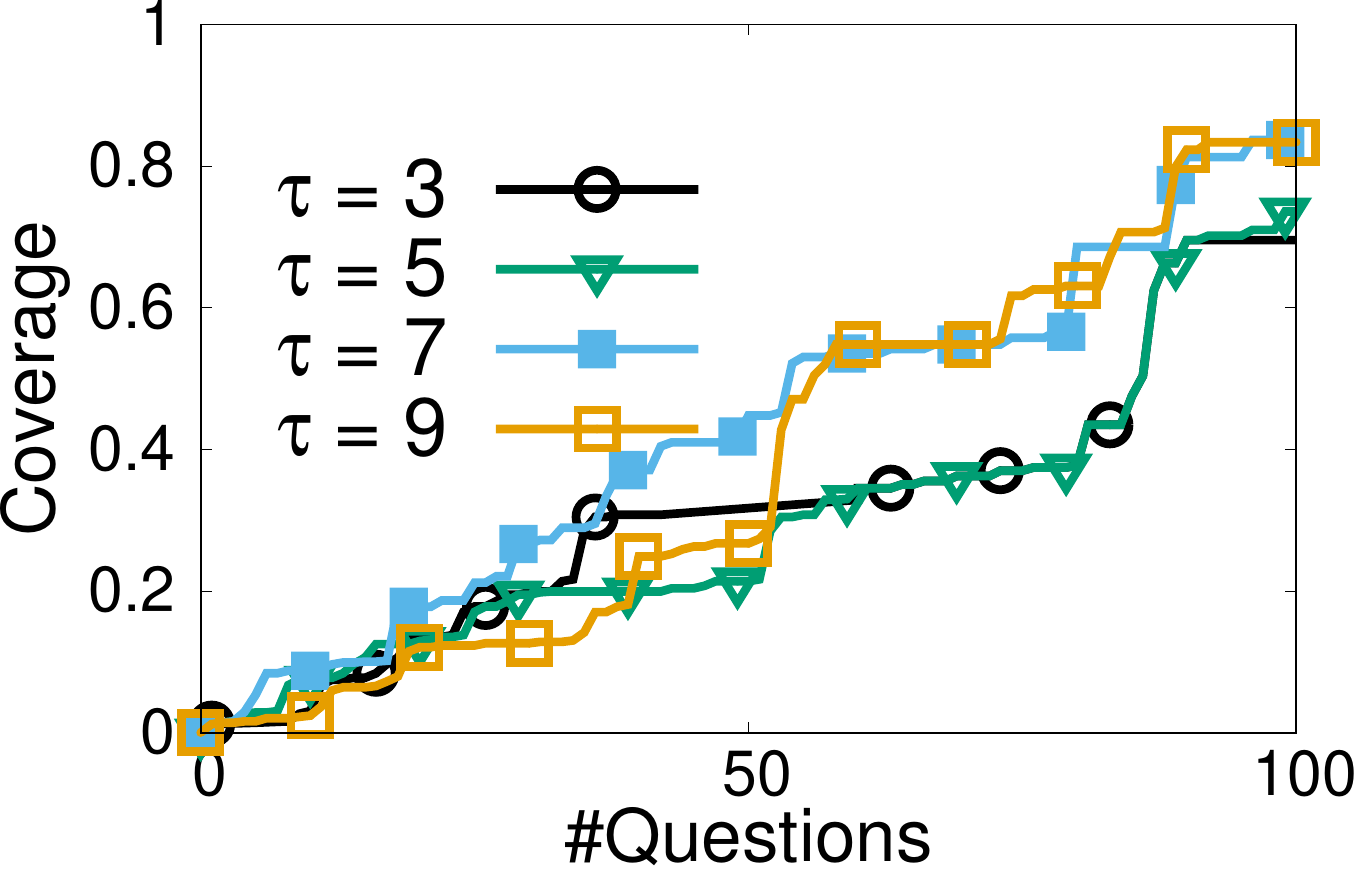}} ~
\subfloat[\texttt{musicians}\label{fig:rule}]{\includegraphics[width=0.24\textwidth]{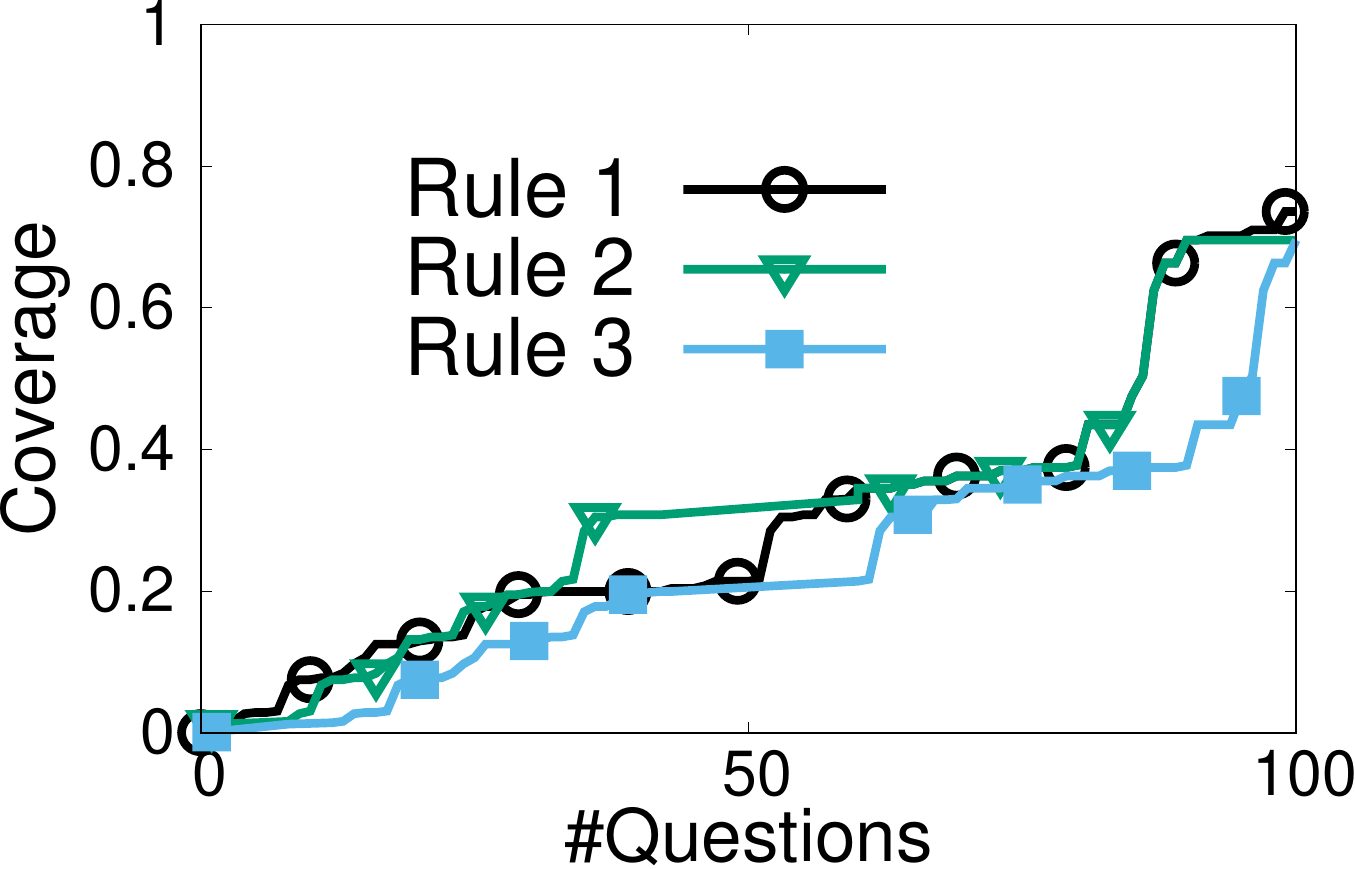}}
\caption{Sensitivity of \darwin{} to $\tau$ and seed rules.}
\end{figure}

\smallskip
\noindent \textbf{Sensitivity to \hybrid's traversal parameters.}
Here, we study to what extent \darwin's performance is sensitive
    to parameter $\tau$ in the \hybrid traversal algorithm. Recall that
parameter $\tau$ determines how often
the \hybrid\ algorithm switches between exploiting the local structure of the
hierarchy as opposed to evaluating all possible candidates using
the classifier.
Figure~\ref{fig:tau} shows that \darwinp{HS} performs very similar on varying
$\tau$. The solution quality tends to improve slightly on increasing $\tau$
because the effective rules for the \texttt{musicians} dataset are not
close to each other in the hierarchy. However, note that choosing
large values of $\tau$
can affect the efficiency of the pipeline. More precisely, large
values of $\tau$
force the \hybrid\ system to rely on its internal classifier
to evaluate all rule candidates for too many steps which can be
quite time consuming. 

\smallskip
\noindent \textbf{Sensitivity to seed rule.} 
This experiment establishes that \darwin{} has a robust performance given different
types of input seed rule. Focusing on the \texttt{musicians} dataset, we initialize \darwin\ with the following seed rules. Rule
1 is the keyword `\emph{composer}' stating that any sentence containing this word
mentions a musician. Rule 2 is the keyword `\emph{piano}' and finally Rule 3 is
the sentence `\emph{Beethoven taught piano to the daughters of Hungarian Countess
Anna Brunsvik.}'. Note that Rule 2 is an extremely generalized version of Rule
3. Figure \ref{fig:rule} compares the performance of \darwinp{HS} for all three
seed rules. \darwin{} performs equally well on three different types of input
seed rules. We can observe that for Rule 3, \darwin{} requires the initial 8
queries to generalize the seed rule, and as soon as it identifies a rule with
high coverage, it performs very similar to the other seed rules.

\begin{figure}
\centering
{\includegraphics[width=0.24\textwidth]{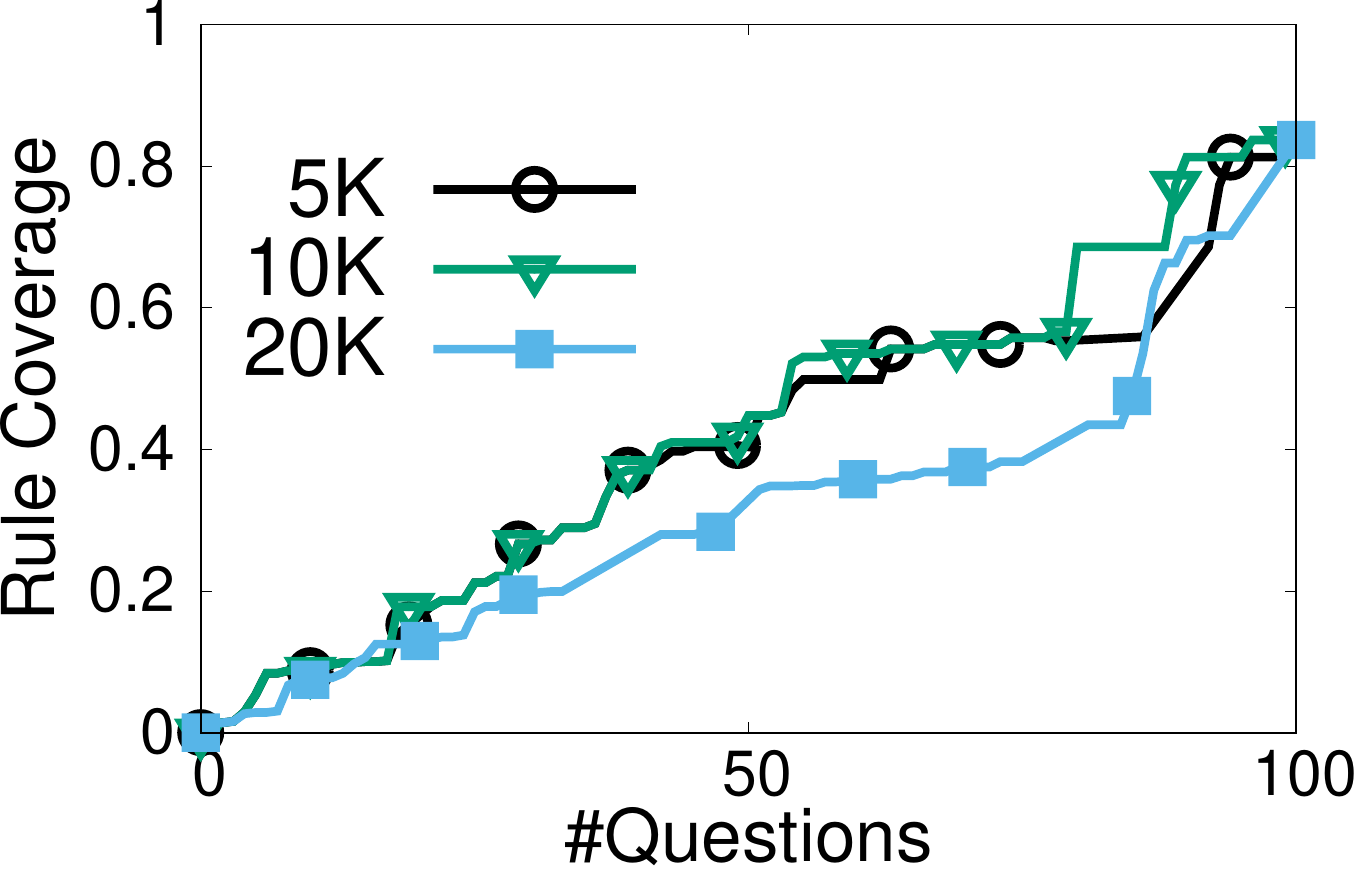}} ~
\caption{Sensitivity of \darwinp{HS}  to the number of candidates generated.\label{fig:indexsize}}
\end{figure}

\smallskip
\noindent\textbf{Sensitivity to number of generated candidates.} One of the parameters
of the \darwin\ framework is
the number of candidates that gets generated by the candidate-rule generation
component. In our experiments, \darwin{} generates 10K candidates rules with high
coverage and organizes them into an index. The goal is to make sure the set of
generated candidates contain some (if not all) of the precise rules. Choosing a
large value for the index size would satisfy this objective but affects the efficiency
and increases the number of candidates that \universal and \hybrid\ algorithms
need to consider. We observed that generating 10K candidates per iteration helped
\darwin{} identify precise candidate rules. Figure \ref{fig:indexsize} shows that
the performance of \darwinp{HS} algorithm is consistently similar for different
number of candidate rules generated.

\smallskip
\noindent\textbf{Sensitivity to classifier quality.} Figure \ref{fig:epoch} compares the
performance of \hybrid{} strategy on \texttt{musicians} dataset by varying the number of
epochs for which the neural network classifier is trained. With more epochs, the classifier tends to overfit more to the training data. We measure the number of questions from
\darwinp{HS} to the oracle in order to label at least 75\% of the positive sentences. It is
evident that the \darwin{} performance is robust to change in behavior of the classifier.

\begin{figure}
\centering
\includegraphics[width=0.24\textwidth]{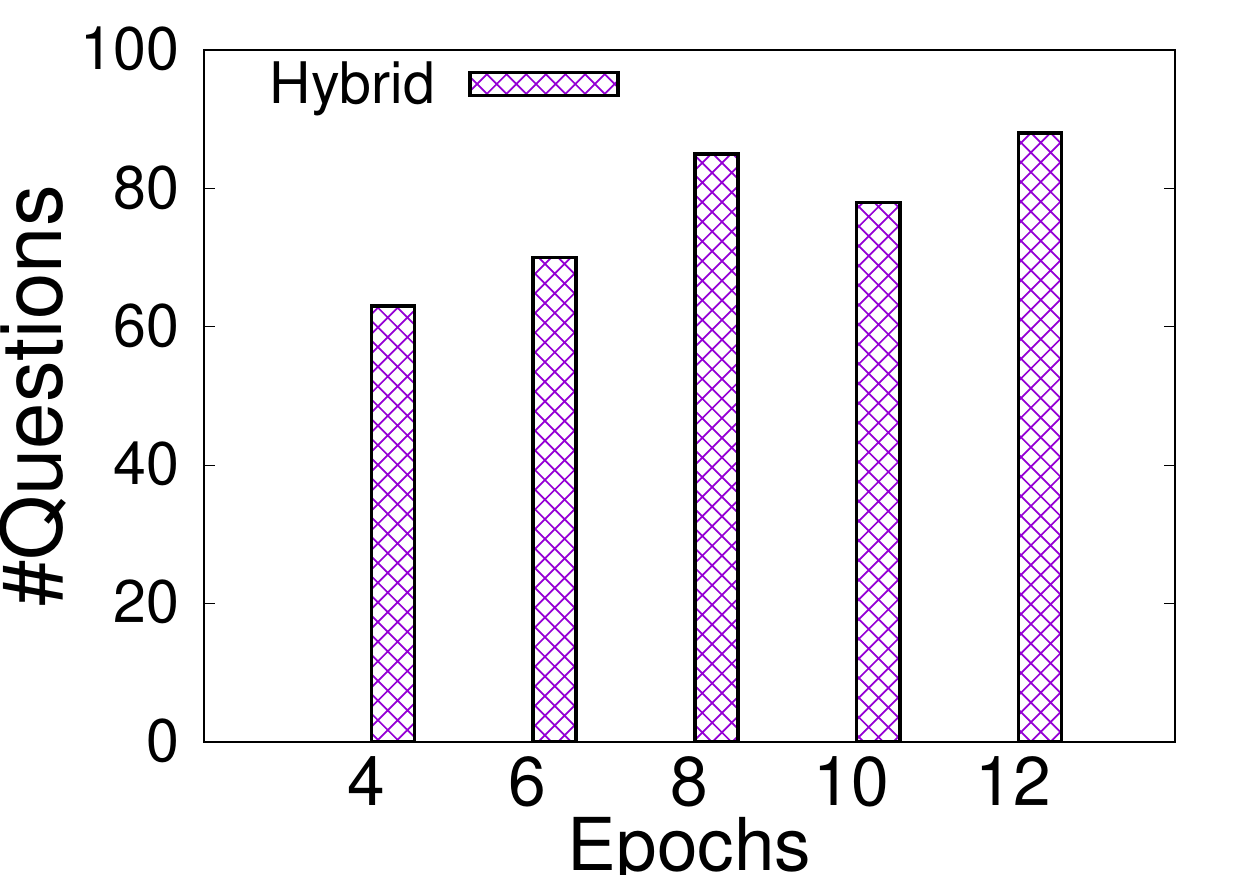}
\caption{Effect of classifier quality on \darwinp{HS} (on \texttt{musicians} dataset)\label{fig:epoch}}
\end{figure}

\end{document}